\documentclass[11pt]{article}

\usepackage[utf8]{inputenc}
\usepackage[T1]{fontenc}
\usepackage[margin=2.5cm]{geometry}
\usepackage{setspace}
\usepackage{amsmath}
\usepackage{amssymb}
\usepackage{amsthm}
\usepackage{booktabs}
\usepackage{graphicx}
\usepackage{makecell}
\usepackage{enumerate}
\usepackage{natbib}
\usepackage[linesnumbered,ruled,vlined]{algorithm2e}
\usepackage{hyperref}

\SetKw{KwDownTo}{downto}
\SetAlFnt{\small}
\SetAlCapFnt{\small}
\SetAlgoNlRelativeSize{-1}
\newtheorem{lemma}{Lemma}
\newtheorem{theorem}{Theorem}
\newtheorem{remark}{Remark}
\newtheorem{definition}{Definition}
\newtheorem{observation}{Observation}
\newtheorem{property}{Property}

\title{Polynomial and Pseudopolynomial Algorithms for Two Classes of Bin Packing Instances}
\author{%
Renan Fernando Franco da Silva\textsuperscript{1},
Vin\'{i}cius Loti de Lima\textsuperscript{2},
Rafael C. S. Schouery\textsuperscript{1},\\
Jean-Fran\c{c}ois C\^{o}t\'{e}\textsuperscript{3},
Manuel Iori\textsuperscript{4}\\[0.5em]
\small \textsuperscript{1}Institute of Computing, University of Campinas, Campinas, SP, Brazil\\
\small \textsuperscript{2}MMPROS, Amazon, Bellevue, WA, USA\\
\small \textsuperscript{3}Facult\'{e} des Sciences de l'administration, Universit\'{e} Laval, Qu\'{e}bec, QC, Canada\\
\small \textsuperscript{4}DISMI, University of Modena and Reggio Emilia, Reggio Emilia, Italy
}
\date{}

\begin{document}
\onehalfspacing
\maketitle

\begin{abstract}
The Cutting Stock Problem (CSP) and Bin Packing Problem (BPP) are classical combinatorial optimization problems extensively studied since the 1960s. State-of-the-art exact algorithms are based on set-cover and arc-flow models whose linear relaxation, rounded up, matches the integer optimum for most benchmark instances, a condition known as the Integer Round-up Property (IRUP). In 2016, Delorme et al. showed that all existing instances could be solved within ten minutes by approaches exploiting this property. This motivated them to introduce two new classes, Augmented IRUP (AI) and Augmented Non-IRUP (ANI), designed to make IRUP less evident to state-of-the-art methods. Although these classes have motivated significant advances over the past decade, 13 out of 500 AI and ANI instances remain unsolved within standard time limits from the literature. In this paper, we show that while AI and ANI are particularly hard for MIP-based methods, the BPP restricted to these classes is not strongly NP-hard. We present polynomial-time algorithms for the AI class and pseudopolynomial-time algorithms for the ANI class, which solve all such instances orders of magnitude faster than previous approaches. They are also straightforward to adapt to the Skiving Stock Problem, the dual counterpart of the CSP. In addition, they can be used as preprocessing routines in exact methods, as their runtime is independent of the instance class, although they are guaranteed to return an optimality status only for instances belonging to the class for which they were designed.
\end{abstract}

\noindent\textbf{Keywords:} Bin Packing; Cutting Stock; Skiving Stock; Polynomial Algorithm; Pseudopolynomial Algorithm

\section{Introduction}

The \emph{Cutting Stock Problem}~(CSP) and the \emph{Bin Packing Problem}~(BPP) are classical combinatorial optimization problems with numerous applications in manufacturing, logistics, and resource management (see, e.g.,~\citealp{Hessler2022,Martin_2023}). In the CSP, a set $I = \{1, \ldots, n\}$ of item types is given, where each $i \in I$ has size $w_i \in \mathbb{Z}_+$ and demand $d_i \in \mathbb{Z}_+$. The goal is to cut $d_i$ copies of each type from rolls of capacity $W \in \mathbb{Z}_+$ so that the total size assigned to any roll does not exceed $W$, while minimizing the number of rolls used. The BPP is a special case of the CSP in which a set $I = \{1, \ldots, n\}$ of items with sizes $w_i \in \mathbb{Z}_+$ must be packed into bins of capacity $W \in \mathbb{Z}_+$ without exceeding capacity and minimizing the number of bins used.

Although the CSP is typically associated with cutting and the BPP with packing, the operations of cutting and packing are mathematically equivalent. In particular, the BPP admits a polynomial-time reduction to the CSP, whereas the converse reduction is only pseudopolynomial in the item demands. In practice, CSP instances often involve items with large demands, whereas BPP instances typically contain fewer items with identical weights (see, e.g., \citealp{Wascher_2007}). Due to their computational complexity and practical relevance, both problems have been extensively studied, resulting in a wide range of exact and heuristic solution methods (see, e.g., \citealp{Delorme_2016}).

In the Skiving Stock Problem~(SSP), one is given a set $I = \{1, \ldots, n\}$ of item types, where each item type $i \in I$ has a size $w_i\in \mathbb{Z}_+$ and multiplicity $m_i \in \mathbb{Z}_+$\@. The goal is to merge these items to form rolls of size at least $W \in \mathbb{Z}_+$, such that each item type $i$ is used at most $m_i$ times, while maximizing the number of rolls produced. The SSP is commonly regarded as the dual counterpart of the CSP (see, e.g., \citealp{Martinovic_2016}).

Proposing new benchmark instances for the CSP that are both reasonably sized and computationally challenging has become increasingly difficult due to the availability of strong formulations, such as set covering and arc-flow models~(see, e.g. \citealp{Loti_2023}), as well as the continuous advancement of solution methods. When proposing new instances, authors have usually taken into consideration the following relevant property.

\begin{property}
	An instance is said to possess the Integer Rounding Up Property (IRUP) for a given formulation if the ceiling of the optimal value of its linear relaxation, $\lceil z_\text{LP} \rceil$, is equal to the optimal value $z_\text{ILP}$ of the corresponding integer problem. Instances are classified as IRUP ($\lceil z_\text{LP} \rceil = z_\text{ILP}$) or Non-IRUP ($\lceil z_\text{LP} \rceil < z_\text{ILP}$) depending on whether they satisfy this property.
\end{property}

Within this field, \citet{Delorme_2016} proposed two benchmark classes, known as the Augmented IRUP (AI) and Augmented Non-IRUP (\textsc{ANI}) classes, each containing 250 instances. These instances have guided the development of new algorithmic techniques, and significant progress has indeed been achieved since their proposal. Notably, state-of-the-art algorithms~(culminating in~\citealp{Silva_2025}) are now able to solve almost all such instances within a one-hour time limit, with only 13 instances of them remaining unsolved. Both classes have also been used as challenging SSP instances in the literature.

In this paper, we investigate the BPP, CSP, and SSP, and devise tailored polynomial-time algorithms for the \textsc{AI} class and pseudopolynomial-time algorithms for the \textsc{ANI} class. Our best algorithms outperform existing exact algorithms by several orders of magnitude on these classes, and they can also be used as preprocessing techniques for general algorithms addressing the studied problems, at the cost of limited additional computational effort.

Assuming that the \emph{P vs NP} problem remains unresolved, our results indicate that, although these instances have played an important role in shaping recent research, new benchmark instances are required, ones that better capture the intrinsic difficulty of strongly NP-hard cutting and packing problems in theory and that also challenge algorithms in practice.

The remainder of this article is organized as follows. Section~\ref{sec::review} reviews the related literature on benchmarks and exact algorithms. Section~\ref{sec::instances} describes some interesting properties about AI and ANI benchmarks. Sections~\ref{sec::AI} and~\ref{sec::ANI} introduce the polynomial-time and pseudopolynomial-time algorithms for the \textsc{AI} and \textsc{ANI} classes, respectively. Section~\ref{sec::experiments} reports computational experiments, and Section~\ref{sec::conclusion} presents concluding remarks and directions for future research.

\section{Literature Review}~\label{sec::review}
Results reported in the literature for the CSP and the BPP are largely interchangeable. The main difference is that approaches explicitly tailored to the CSP are, in general, better suited for instances with large item demands. Moreover, since the literature on the BPP and the CSP is much richer than the one on the SPP, and algorithms for the SSP are often adaptations of methods originally developed for the BPP and the CSP, in this section, we focus primarily on the BPP and CSP, commenting about the SSP only when necessary.

Surveys dedicated to approximation algorithms for bin packing problems have been published since the 1980s~\citep{Garey_1981,Coffman_1984}, with the most recent one being proposed by \citet{Coffman_2013}. A problem typology was originally presented by~\citet{Dyckhoff_1990}, and later extended by~\citet{Wascher_2007}. The most recent survey addressing both the BPP and the CSP is due to~\citet{Delorme_2016}, who focused on mathematical formulations and presented computational experiments with the most relevant solvers available at the time. Since its publication, however, substantial progress has been achieved in the development of exact algorithms. In what follows, we therefore review the most relevant results on exact methods, referring to them for simplicity as results for the BPP.

For several decades, the most effective exact algorithms for the BPP were based on Combinatorial Branch-and-Bound (CB\&B) methods, that is, branch-and-bound approaches not relying on Linear Programming (LP). The most influential algorithm in this category is MTP, proposed by~\citet{Martello_1990}. MTP combines multiple lower bounds, reduction procedures, and effective heuristics, and it became a standard reference for the exact solution of the BPP, in part due to the public availability of its implementation. Later,~\citet{Scholl_1997} proposed BISON, which represented the state-of-the-art among CB\&B algorithms for the BPP at that time. BISON incorporates several powerful components of MTP, strengthens them with new lower bounds, and integrates techniques such as Tabu Search.

With the availability of increasingly efficient LP solvers, branch-and-price (B\&P) approaches gradually replaced CB\&B methods. \citet{Vance_1994} and \citet{Vance_1998} are two notable works from the beginning of this transition period, proposing B\&P algorithms based on the classical Set Covering Formulation (SCF) for the BPP and the CSP, respectively. The classical SCF for the CSP (see \citealp{Uchoa_2024} for a detailed discussion on its origin) is given by:
\begin{align} 
\min \quad & \sum_{p \in \mathcal{P}} x_p \label{scf:obj} \\ 
\text{s.t.} \quad & \sum_{p \in \mathcal{P}} a_{ip} x_p \;\ge\; d_i, \quad & & \forall i \in I, \label{scf:dem} \\
& x_p \in \mathbb{Z}_+, \quad & & \forall p \in \mathcal{P}. \label{scf:int}
\end{align}
Here, \( \mathcal{P} \) denotes the set of all feasible cutting patterns, that is, all multisets of items whose total weight does not exceed \( W \), and \( a_{ip} \) denotes the number of items of type \( i \) in pattern \( p \). Each variable \( x_p \) represents the number of times pattern \( p \) is used, and the objective function minimizes the total number of rolls. The set of patterns \( \mathcal{P} \) may be restricted to include only patterns in which each item \( i \) appears at most \( d_i \), and such patterns are called \emph{proper} patterns. Patterns that do not satisfy this condition are referred to as \emph{non-proper} patterns.

By replacing the inequalities \eqref{scf:dem} with equalities, we obtain the well-known Set Partitioning Formulation (SPF) for the CSP. Moreover, the SPF with a maximization objective and a set of patterns consisting of multisets of items whose weight sum is at least \( W \) models the SSP.

Subsequently,~\citet{Carvalho_1999} proposed a B\&P for the BPP relying on the arc-flow formulation (AFF), which can be obtained from the SCF by interpreting patterns as paths in a specific network and then decomposing these paths into arcs by introducing a variable for each arc.

Both SCF and AFF are known to yield strong linear relaxations. Indeed, a central conjecture in the context of the CSP is the following.

\begin{property}
The Modified Integer Rounding Up Property (MIRUP) conjecture (see, e.g., \citealp{Scheithauer_1995}) states that every CSP instance satisfies
\[
	z_\mathrm{ILP} \leq \lceil z_\mathrm{LP} \rceil + 1,
\]
where \( z_\mathrm{ILP} \) denotes the optimal value of the integer problem and \( z_\mathrm{LP} \) is the optimal value of the linear relaxation of the SCF allowing non-proper patterns. 
\end{property}

While all known instances satisfy the MIRUP conjecture, Non-IRUP instances exist even for the SCF restricted to proper patterns. Furthermore, since the SCF with non-proper patterns and the AFF are equivalent~\citep{Delorme_2020}, the conjecture applies to the AFF as well. Hereafter, instances are classified as IRUP or Non-IRUP with respect to the SCF with non-proper patterns.

Several additional LP-based exact solvers were proposed in subsequent years. Notably,~\citet{Belov_2006} introduced a branch-cut-and-price algorithm, named \texttt{BELOV}, based on the SCF, which proved to be extremely powerful. In their survey,~\citet{Delorme_2016} reviewed exact methods available at that time and observed that the benchmark instances commonly used in previous studies were no longer sufficient to distinguish among the leading solvers. Indeed, state-of-the-art solvers could handle these instances very efficiently, with \texttt{BELOV} being able to solve all of them when executed on modern hardware with a recent version of the CPLEX solver.

In response to this situation,~\citet{Delorme_2016} proposed two new benchmark classes, Augmented Non-IRUP (\textsc{ANI}) and Augmented IRUP (\textsc{AI}), which proved to be particularly challenging for all exact solvers available at that time, as well as for those developed in the following decade. Consequently, subsequent research has largely focused on improving performance on these classes.

These instances are generated using a specific construction technique and exhibit distinctive properties. For \textsc{ANI} instances, it holds that \( z_\mathrm{LP} \in \mathbb{Z}_+ \) and \( z_\mathrm{ILP} = z_\mathrm{LP} + 1 \), which implies that they are Non-IRUP, and this property holds even when only proper patterns are used. On the other hand, \textsc{AI} instances satisfy \( z_\mathrm{ILP} = z_\mathrm{LP} \in \mathbb{Z}_+ \), meaning that they are IRUP. Recent works by \citet{Wei_2020}, \citet{Pessoa_2020}, \citet{Loti_2023}, and \citet{Silva_2025} have reported significant progress on these two benchmarks, with \citet{Silva_2025} successfully solving 487 out of 500 instances.

As discussed by~\citet{Silva_2025}, the main difficulty of these benchmarks lies in solving their LP relaxations (from SCF or AFF), primarily due to numerical and stability issues. These issues are largely caused by the high degree of symmetry introduced by the particular structure of the instances (see Section~\ref{sec::instances} for more details), which makes the LP relaxations significantly harder to solve than those of other instance classes. The symmetries were so drastic that they completely degraded the convergence of the LP relaxation solvers, particularly in column generation approaches. Although solving such relaxations is a weakly NP-hard problem, it took nearly a decade for the literature to develop sophisticated techniques capable of producing effective algorithms. Once an effective algorithm for the LP relaxation is available, most of these classes are no longer particularly difficult. In fact, as noted by~\citet{Silva_2025}, around 70\% of the instances can be solved easily with only a few branching decisions or subset-row cutting planes.

As we discuss in more detail in the following section, we believe that this observation is not a coincidence. It stems from the fact that the LP relaxation is essentially forced to select specific patterns that appear in the construction process, which allows a small number of branching decisions or cutting planes to either increase the LP value (in the \textsc{ANI} case) or turn the LP relaxation into an integer solution (in the \textsc{AI} case). An additional observation is that, in the \textsc{ANI} case, any improvement of the LP bound implies solving the instance in practice, since an optimal integer solution can be found very easily for this class.

In this work, we propose a polynomial-time exact method for \textsc{AI} instances and a pseudopolyno\-mial-time exact method for the \textsc{ANI} instances. These two solvers can solve all instances from both classes in a few seconds per instance, which is orders of magnitude faster than previous state-of-the-art solvers. In light of these findings, we recognize that these classes played an important role in the development of better solvers for the BPP and CSP, but we believe that future research should focus on devising other more challenging instances.

\section{Some Relevant Properties of the AI and ANI Classes}~\label{sec::instances}
To build an \textsc{ANI} instance for the BPP, \citet{Delorme_2016} propose an algorithm designed to increase the likelihood of generating, in practice, an instance with a specific number of items. For a given $h \in \mathbb{Z}_+$ and $n = 15 + 3h$, an \textsc{ANI} instance $I = \{1, \dots, n\}$ with bin capacity $W$ is constructed by starting from a Non-IRUP instance $O$ containing 15 items such that
\[
	\sum_{i \in O} w_i = 3W, \qquad z_{\mathrm{LP}}^O = 3, \qquad z_{\mathrm{ILP}}^O = 4.
\]
We refer to $O$ as the \emph{original instance}. The original instances 
used by \citet{Delorme_2016}
are the ones proposed by~\citet{Caprara_2015}. The construction then proceeds through $h$ iterations. In each iteration $k \in \{1, \dots, h\}$, a triplet $(a_k, b_k, c_k)$ with total weight $W$ is added by satisfying the following condition:
\begin{equation}
	\label{cond:no-completion}
	\nexists S \subseteq \{a_g, b_g, c_g~: 1 \leq g \leq k - 1\} \cup O
	\text{ such that } \sum_{i \in S} w_i + w_{a_k} = W.
\end{equation}
In other words, at iteration~$k$, every pattern with no waste that contains $a_k$ also contains either $b_k$ or $c_k$ (or both). Hence, this construction yields an \textsc{ANI} instance \mbox{$I = O \cup \{a_k, b_k, c_k\colon k \in \{1, \dots, h\}\}$} that remains Non-IRUP, with \(z_{\mathrm{LP}}^{I} = h + 3\) and \(z_{\mathrm{ILP}}^{I} = h + 4\). Each \textsc{AI} instance \(I'\) is derived from an \textsc{ANI} instance \(I\) by selecting an arbitrary item \(o \in O\) and splitting it into two items \(o_1\) and \(o_2\), in such a way that the resulting original instance \(O'\) has 16 items and satisfies \(z_{\mathrm{LP}}^{O'} = z_{\mathrm{ILP}}^{O'} = 3\) (i.e., the Non-IRUP property of $O$ is lost), while the associated \textsc{AI} instance $I'$  fulfills \(z_{\mathrm{LP}}^{I'} = z_{\mathrm{ILP}}^{I'} = h + 3\). Observe that all optimal solutions for the \textsc{AI} class are \emph{perfect packing solutions}, that is, their patterns have no waste.

Next, we introduce a generalized construction process that does not restrict the original instance \(O\) to size 15, nor require extending \(O\) with tuples of size 3. Given constants $\alpha, \beta, \gamma \in \mathbb{Z}_+$ with $\gamma \geq 3$, a generalized $(\alpha, \beta, \gamma)$-\textsc{ANI} instance $I$ can be constructed by starting from an original instance \(\widehat O\) with \(|\widehat O| = \alpha\), satisfying \(z_{\mathrm{LP}}^{\widehat O} = \sum_{i \in \widehat O} w_i / W = \beta\) and \(z_{\mathrm{ILP}}^{\widehat O} = \beta + 1\), and extending it by adding \(\gamma\)-tuples each of weight sum \(W\), where the first item $a_k$ in each tuple cannot be completed to a pattern with no waste using any subset of previously introduced items. A generalized $(\alpha, \beta, \gamma)$-\textsc{AI} instance $I'$ is then obtained from $I$ by selecting one item of \(\widehat O\) and splitting it into two items, so that the resulting original instance \(\widehat O'\) satisfies the IRUP. To the best of our knowledge, there is no guarantee that this transformation is possible for every Non-IRUP instance with an integrality gap equal to \(1\). However, \(\widehat O\) can be restricted to instances that admit such a transformation. In this article, we also report the computational complexity of our algorithms on these generalized classes.

Each AI and ANI instance, including the generalized ones, can also be viewed as an SSP instance. Analogous statements hold for the SPF for both the CSP and the SSP. In particular, the sets of optimal linear relaxation solutions for SPF are identical, in both problems, for the (generalized) \textsc{AI} and \textsc{ANI} classes. For the (generalized) \textsc{AI} class, the optimal integer solutions coincide for the CSP and the SSP, whereas for the (generalized) \textsc{ANI} class, the optimal integer value of the SSP equals \( h + 2 \) (resp. $h + \beta - 1$), which corresponds to an absolute integrality gap of one unit, as in the CSP case.

Perfect packing solutions play a central role in (generalized) \textsc{AI} instances, where such solutions must be found, and in (generalized) \textsc{ANI} instances, where it must be shown that no perfect packing exists. Any perfect packing solution uses only patterns with no waste, which we refer to as \emph{full patterns}. Full patterns consisting of exactly three items are called \emph{full triplets}.

The following remark establishes a necessary property induced by condition~\eqref{cond:no-completion}.

\begin{remark}~\label{remark:distinct_weights}
	The set of items $\{a_1, \ldots, a_h\}$ has pairwise distinct weights, that is, \(w_{a_k} \neq w_{a_g}\) for all distinct \(k, g \in \{1, \ldots, h\}\). 
\end{remark}
 
Moreover, all AI and ANI instances reported in the literature satisfy the following condition:
\begin{equation}\label{cond:a_is_large}
w_{a_k} \geq \frac{W}{2} \quad \text{for all } k \in \{1, \ldots, h\},
\end{equation}
which we later exploit to derive a more efficient algorithm. Note that condition~\eqref{cond:a_is_large} is not stated explicitly in  \citet{Delorme_2016}. In fact, it is possible to generate instances in which the last generated triplets do not satisfy condition~\eqref{cond:a_is_large} while using the same original instances, number of items, and bin capacity as originally proposed. However, under these same conditions, we were unable to generate instances with a large number of triplets that violate condition~\eqref{cond:a_is_large}. This suggests that instances satisfying this condition are significantly more likely to occur during the generation process, which explains why all instances reported in the literature satisfy it.

Next, we say that an instance is a \emph{perfect-packing candidate} if each item belongs to at least one full pattern, a property satisfied by every (generalized) \textsc{AI} and (generalized) \textsc{ANI} instance. The next two definitions introduce notions of irreducibility for general perfect-packing candidate instances and original-irreducibility for generalized $(\alpha, \beta, \gamma)$-\textsc{ANI} instances, respectively. 

\begin{definition}\label{definition::irreducible_general}
A perfect-packing candidate instance $I$ is said to be \emph{reducible} if there exist distinct items $a, b \in I$ such that the sets of all full patterns containing $a$ and $b$, 
denoted by $\mathcal{P}_a^{\text{full}}$ and $\mathcal{P}_b^{\text{full}}$, respectively, satisfy 
$\mathcal{P}_a^{\text{full}} \subseteq \mathcal{P}_b^{\text{full}}$. 
In this case, $(a,b)$ is called a \emph{reducible pair}. Conversely, $I$ is said to be \emph{irreducible} if it is not reducible.
\end{definition}

\begin{definition}\label{definition::irreducible}
An $(\alpha, \beta, \gamma)$-ANI instance $I$ is said to be 
\emph{original-irreducible} (respectively, \emph{original-reducible}) 
if its corresponding original instance $\widehat O$ is irreducible 
(respectively, reducible).
\end{definition}

Remark~\ref{rmk:reducible_exist} shows that reducible instances exist. Another interesting structural property of \textsc{ANI} instances reported in the literature is stated in Remark~\ref{remark::irreducible_claim}, which is exploited by our algorithms in Section~\ref{sec::ANI}. Analogous original-irreducibility properties could also be stated for \textsc{AI} instances, although they are not required by our algorithms. 

\begin{remark}\label{rmk:reducible_exist}
There exist perfect-packing candidate instances that are reducible. For example, consider an irreducible perfect-packing candidate instance $I$ with bin capacity $W$, such as an ANI instance. We can construct a reducible instance from $I$ by multiplying every item weight and the bin capacity by $2$. Then, select an item $a$, which now has an even weight, and split it into two items $b$ and $c$, each with an odd weight and satisfying $w_b + w_c = w_a$. The resulting instance is reducible, and $(b, c)$ forms a reducible pair.
\end{remark}

\begin{remark}\label{remark::irreducible_claim}
All ANI instances reported in the literature are original-irreducible.
\end{remark}

\section{A Polynomial Algorithm for the AI Class}~\label{sec::AI}
In this section, we present polynomial-time algorithms devoted to the solution of the (generalized) AI class. We first present a naive algorithm, and then follow with two improved variants that make use of some properties that characterize the benchmark instances. 

We have a few observations before describing the algorithms.

\begin{observation}
In the BPP, items with identical weights are indistinguishable. Therefore, when identifying an item, an algorithm only needs to determine the corresponding weight rather than the specific item index. Thus, in the following, when no confusion arises, we may simply use $a$ to denote an item of weight $w_a$.
\end{observation}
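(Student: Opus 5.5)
The plan is to show that every quantity the BPP cares about (feasibility of a bin, number of bins used, and membership of a pattern in the set of full patterns) depends only on the \emph{multiset of weights} placed in each bin, not on item indices. The argument is a symmetry argument through a weight-preserving permutation of items.

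First, I fix an instance with items $I$ and capacity $W$. I take any permutation $\pi$ of $I$ satisfying $w_{\pi(i)} = w_i$ for every $i \in I$; such a $\pi$ may only swap items of identical weight. Given any packing $\{B_1, \dots, B_m\}$, I define the image packing $\{\pi(B_1), \dots, \pi(B_m)\}$.
\begin{enumerate}[(i)]
\item Each $\pi(B_j)$ has the same total weight as $B_j$, since $\sum_{i \in B_j} w_{\pi(i)} = \sum_{i \in B_j} w_i$. Hence capacity feasibility is preserved, and so is the property of having no waste.
\item The sets $\pi(B_j)$ still partition $I$, because $\pi$ is a bijection.
\item The number of bins $m$ is unchanged.
\end{enumerate}
It follows that $\pi$ maps feasible solutions to feasible solutions of equal cost, optimal ones to optimal ones, and full patterns to full patterns.

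Consequently, an algorithm that decides which items go into a bin only needs to output, for each weight value, how many items of that weight are placed there. A concrete assignment of indices can then be recovered afterwards by distributing the items of each weight class arbitrarily among the bins, with the counts matching. This is exactly the CSP view with $d_i$ equal to the multiplicity of each weight. It justifies using $a$ to denote ``an item of weight $w_a$'' whenever only the weight matters.

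There is no real obstacle here. The only point needing care is that the definitions used later, such as reducible pairs (Definition~\ref{definition::irreducible_general}), are phrased in terms of specific items. For those, the identification is used only in statements about patterns and solutions, which by (i)--(iii) are invariant under $\pi$, so no confusion arises.
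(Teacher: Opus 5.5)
Your argument is correct. The paper gives no proof and states this as a self-evident notational convention, and your weight-preserving permutation $\pi$ (which preserves bin loads, the partition, and the bin count, and hence maps full patterns and optimal solutions to full patterns and optimal solutions) is exactly the symmetry that ``indistinguishable'' appeals to; you are making the paper's implicit reasoning explicit, not taking a different route. Your extra check on item-level notions such as reducible pairs is also sound: $\pi$ permutes the full patterns, so $\mathcal{P}_{\pi(a)}^{\text{full}} = \pi(\mathcal{P}_a^{\text{full}})$, and the inclusions in the definition of a reducible pair are preserved.
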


\begin{observation}
The algorithms in this section are described for the BPP, but they can also be applied to CSP (and SSP) instances with arbitrary demands after reducing them to an equivalent unit-demand instance \(\widehat I\). Given a CSP/SSP instance \(I\) and constants \(\alpha,\beta,\gamma \in \mathbb{Z}_+\), the corresponding unit-demand instance \(\widehat I\) can be a \((\alpha,\beta,\gamma)\)-AI instance only if
\begin{equation}~\label{eq::bpp_reduction}
	\sum_{i\in I} d_i \le \gamma |I|+\alpha+1,
\end{equation}
since each \(\gamma\)-tuple must contain at least one item with distinct weight in \(I\), and in the worst case every item in the original instance may have distinct weight. Condition~\eqref{eq::bpp_reduction} can be checked in \(O(|I|)\) time, and the reduction to \(\widehat I\) needs to be performed only when~\eqref{eq::bpp_reduction} holds.
\end{observation}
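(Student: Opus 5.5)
The claim to establish is the counting bound~\eqref{eq::bpp_reduction}: if the unit-demand expansion \(\widehat I\) of a CSP/SSP instance \(I\) is a generalized \((\alpha,\beta,\gamma)\)-AI instance, then \(\sum_{i\in I} d_i \le \gamma|I|+\alpha+1\). The plan is to count the items of \(\widehat I\) from its construction, and then bound the number \(h\) of added tuples by the number of item types of \(I\).

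\emph{Step 1 (item count).} By construction, \(\widehat I\) starts from an original instance \(\widehat O\) with \(|\widehat O|=\alpha\). It is extended by \(h\) \(\gamma\)-tuples, and then one item of \(\widehat O\) is split into two. Hence
\[
\sum_{i\in I} d_i=|\widehat I|=\alpha+1+\gamma h .
\]
It therefore suffices to show that \(h\le |I|\).

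\emph{Step 2 (distinct first items).} I extend Remark~\ref{remark:distinct_weights} to \(\gamma\)-tuples and check that it survives the AI splitting step. Suppose \(w_{a_k}=w_{a_g}\) for some \(g<k\). Let \(S\) be the remaining \(\gamma-1\) items of tuple \(g\). Then \(S\) consists of previously introduced items and satisfies
\[
\sum_{i\in S}w_i+w_{a_k}=\sum_{i\in S}w_i+w_{a_g}=W,
\]
which violates the no-completion condition~\eqref{cond:no-completion} (generalized) at iteration \(k\). This argument is carried out in the ANI instance from which \(\widehat I\) is derived. The AI transformation only splits an item of \(\widehat O\) and leaves the tuples untouched, so the weights \(w_{a_1},\dots,w_{a_h}\) remain pairwise distinct in \(\widehat I\).

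\emph{Step 3 (conclusion).} Every item of \(\widehat I\) has the weight of some item type of \(I\). Thus \(\widehat I\) has at most \(|I|\) distinct weights, and since the \(h\) weights \(w_{a_k}\) are pairwise distinct, \(h\le|I|\). Combining with Step~1 gives
\[
\sum_{i\in I} d_i=\alpha+1+\gamma h\le \gamma|I|+\alpha+1 .
\]
Note that the bound is generous: it also implicitly allows every item of the original instance to have a distinct weight, exactly as the observation states.

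\emph{Running time and main obstacle.} The \(O(|I|)\) claim is immediate, since checking~\eqref{eq::bpp_reduction} only requires summing the demands and comparing the result with a constant. The only step needing care is Step~2. There I must make sure that the distinctness argument uses \emph{all} of tuple \(g\) as the completing set \(S\), which is legitimate because \(g<k\), and that the splitting performed when passing from ANI to AI does not affect any \(a_k\).
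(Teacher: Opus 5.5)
Your proposal is correct and follows the same counting argument the paper sketches: $|\widehat I| = \alpha+1+\gamma h$, and $h \le |I|$ because the first items $a_k$ of the tuples have pairwise distinct weights (the $\gamma$-tuple analogue of Remark~\ref{remark:distinct_weights}). You just make explicit what the paper leaves implicit, namely the no-completion argument that uses the rest of tuple $g$ as the completing set, and the fact that the AI splitting step leaves the tuples untouched.
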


\begin{observation}
All algorithms in this section can be executed on any BPP instance. However, they are guaranteed to produce a perfect packing solution (and hence an optimal solution) only when the instance belongs to the (generalized) AI class.
\end{observation}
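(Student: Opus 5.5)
The statement has two parts. The first, that the algorithms run on any BPP instance, is a matter of how they are designed. The second, that a perfect packing is guaranteed on (generalized) AI instances, needs an argument. I would handle each separately.

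\emph{Any instance.} I would check that every algorithm of this section consists only of operations defined on an arbitrary instance: enumerating full patterns of bounded cardinality, removing items, and solving a residual instance of constant size by enumeration. Each of these terminates within the stated polynomial (or pseudopolynomial) bound whatever the input is. At the end, the algorithm verifies that the patterns it returns partition $I$ and have no waste. If this check fails, it reports that no optimality certificate was obtained instead of returning an invalid packing. So the guarantee in the statement is one-sided, and nothing about the instance class is needed for termination or correctness of the output status.

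\emph{AI instances: optimality.} First I would note that any perfect packing uses exactly $\sum_{i\in I} w_i / W$ bins. This equals the trivial lower bound $\lceil \sum_i w_i / W\rceil$, so it is optimal for the BPP. The same holds for the CSP, and for the SSP as discussed in Section~\ref{sec::instances}. It therefore suffices to show that on a (generalized) $(\alpha,\beta,\gamma)$-AI instance the algorithm always finds a perfect packing.

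\emph{AI instances: existence.} I would argue by induction on the number $h$ of added tuples, mirroring the construction in reverse. The base case is the original instance $\widehat O'$. It has constant size $\alpha+1$ and satisfies the IRUP with $z_{\mathrm{LP}} = \beta = \sum_i w_i / W$, so it admits a perfect packing, which exhaustive enumeration finds in constant time. For the inductive step, I would use condition~\eqref{cond:no-completion} on $a_h$. Every full pattern containing $a_h$ also contains $b_h$ or $c_h$, and $(a_h,b_h,c_h)$ is itself a full $\gamma$-tuple. Removing it leaves a (generalized) AI instance with $h-1$ tuples. Hence there is always a peeling step that preserves the existence of a perfect packing. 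By Remark~\ref{remark:distinct_weights}, the weights $w_{a_k}$ are pairwise distinct, so the algorithm can locate the relevant item by weight alone.

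\emph{Main obstacle.} The algorithm does not know which tuple was added last, so it may peel a different full tuple. The key step is an exchange argument. Suppose the algorithm removes a full tuple $T$ selected by its rule, for example an item whose full completions are the most constrained. I would show that if $I$ has a perfect packing, then $I\setminus T$ also has one. Concretely, take a perfect packing of $I$ and consider the patterns that contain items of $T$. Using the no-completion condition~\eqref{cond:no-completion} on the tuples involved, I would rearrange those patterns so that $T$ becomes one of them, while the other patterns stay full. If a clean exchange fails for some choice, the fallback is to make the algorithm branch over the polynomially many full tuples containing the chosen item. I would then bound the branching depth using the fact that after the correct removal the instance is again AI, so each wrong branch is detected within constant depth. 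This exchange (or bounded-branching) lemma is where I expect the real work to lie.
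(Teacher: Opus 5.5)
Your first part (the algorithms run on any instance and only ever return Fail or a valid packing) and your optimality argument for perfect packings are fine. The existence argument, however, has a genuine gap, and the route you sketch cannot be completed as stated.

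\textbf{The no-completion condition is used too early.} You apply condition~\eqref{cond:no-completion} to the whole AI instance. That condition was imposed with respect to the ANI original set $O$. Splitting $o$ into $o_1,o_2$ can create new full patterns, including new full triplets, that contain $a_h$ but avoid $b_h$ and $c_h$; the paper stresses exactly this. So the fact that every full pattern containing $a_h$ meets $\{b_h,c_h\}$ is available only after $O'$ has been removed.

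\textbf{The exchange lemma is false for a triplet-based rule.} Take $W=100$ and items $60,14,13,13,20,20,47,47,33,33$. Then $\{60,14,13,13\}$, $\{20,47,33\}$, $\{20,47,33\}$ is a perfect packing, and $60$ lies in exactly one full weight triplet, $(60,20,20)$. Yet $\{47,47,33,33,14,13,13\}$ has no subset of weight $100$, so removing that triplet destroys perfect packability. Items of $O'$ can behave like $60$ here; the paper calls them false triplet candidates.

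If ``most constrained'' instead means that every full pattern of any size containing $a$ meets $\{b,c\}$, the exchange is valid (Theorem~\ref{theorem::mandatory_triplet}). But testing this is a subset-sum problem, so polynomiality is lost. Moreover, on AI instances such an item need not exist at every stage; the paper reports that its forced-fixing ANI solver often stops on AI instances with a residual larger than $O'$.

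\textbf{The fallback does not help.} Nothing makes a wrong branch die within constant depth. A wrongly fixed triplet can go unnoticed until the final check that the leftover items pack into $\beta$ bins. Unrestricted branching at every level therefore has no polynomial bound.

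\textbf{The missing idea.} Instead of proving an exchange on the whole instance, guess the constant-size exceptional part. Algorithm~\ref{alg::naive_AI} enumerates all $\binom{n}{16}$ sets $C$ (in general $\binom{n}{\alpha+1}$). On the branch $C=O'$, the residual consists only of original tuples, where condition~\eqref{cond:no-completion} does hold. Theorem~\ref{th:unique_triplet} then gives an item lying in exactly one full weight triplet, and guarantees that any such triplet has the weights of an original triplet. Observation~\ref{ob:residual}, together with the interchangeability of equal-weight items, lets the greedy peeling recover all $h$ tuples by induction, and $C$ is packed by enumeration. Wrong guesses merely return Fail, and there are only polynomially many guesses.

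The refined algorithms guess only the split items $\{o_1,o_2\}$ plus at most $|L|-h$ false candidates. The practical backtracking stays polynomial because each root-to-leaf path contains a constant number of non-greedy decisions ($n_s\le 2$, $n_b\le n_b^{\max}$, $|R|\le 16$). It does not rely on wrong branches being detected early.
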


\begin{observation}
With minor adjustments, algorithms for the AI class can also be adapted to find an optimal solution for the (generalized) ANI instances. The main distinction lies in the computation of a tight lower bound. For AI instances, the lower bound $\lceil \sum_{i \in I} w_i / W \rceil$ is tight, so the challenge is just to build an optimal solution. On the other hand, we are not aware of any polynomial-time procedure for computing a tight lower bound for ANI instances or for determining whether a given instance is ANI. (This is further discussed in Section~\ref{sec::ANI}, where we present pseudopolynomial-time algorithms for the ANI instances.)
\end{observation}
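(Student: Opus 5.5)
The plan is to show that the structure the AI algorithms exploit lives entirely in the $\gamma$-tuples, and that AI and ANI instances share this structure. By construction, a (generalized) AI instance $I'$ and its parent ANI instance $I$ differ only in the original part: $\widehat O'$ is obtained from $\widehat O$ by splitting one item. The tuples $\{a_k,b_k,c_k,\dots\}$ and condition~\eqref{cond:no-completion} are therefore identical in both.

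Any AI algorithm will proceed in two phases. In the first, it identifies the tuples one by one and removes them from the instance, packing each as a full bin. Because of~\eqref{cond:no-completion}, this can be done in reverse order of construction: at stage $k$ the item $a_k$ has a distinct weight (Remark~\ref{remark:distinct_weights}), and every full pattern containing $a_k$ that uses only the not-yet-removed items must also use $b_k$ or $c_k$. In the second phase, the algorithm handles the residual original instance, which has constant size $\alpha$.

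For an ANI instance I would keep the first phase unchanged. I would replace the second phase, which for AI finds a perfect packing of $\widehat O'$, by an exhaustive optimal packing of $\widehat O$. This step enumerates all set partitions of $\alpha$ items, so it runs in time bounded by the Bell number $B_\alpha$, a constant. The output uses $h$ full bins plus $z_{\mathrm{ILP}}^{\widehat O}=\beta+1$ bins, i.e., $h+\beta+1$ bins in total (this is $h+4$ for the classical class). Since the construction guarantees $z_{\mathrm{ILP}}^{I}=h+\beta+1$, the solution is optimal. The SSP version is analogous: the output has $h$ full rolls plus an optimal skiving of $\widehat O$, which gives $h+\beta-1$ rolls.

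The main obstacle is checking that the first phase behaves identically on $I$ and $I'$. The key question is whether any step of the AI procedure ever relies on the existence of a perfect packing of the whole remaining instance. An example would be a step that certifies a tuple choice by completing the residual instance perfectly; such a step would fail on ANI. I would first try to show that each peeling decision depends only on condition~\eqref{cond:no-completion} and on the multiset of remaining weights. Any reference to $\widehat O$ can then be replaced by a reference to $\widehat O$ instead of $\widehat O'$ without changing the decisions.

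The second point to state explicitly is that the lower bound $h+\beta+1$ is not certified by the algorithm itself. Optimality follows from the promise that the input belongs to the ANI class, not from a computed bound. This is precisely the distinction the observation draws: for AI, $\lceil\sum_i w_i/W\rceil$ suffices, whereas for ANI no polynomial-time tight bound or class membership test is known.
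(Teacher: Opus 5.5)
Your overall picture matches what the observation intends: peel the $\gamma$-tuples, handle a constant-size residual, and get optimality only from the promise $z_{\mathrm{ILP}}^I=h+\beta+1$ rather than from a computed bound. The step that would fail is how you plan to resolve your ``main obstacle''. You propose to show that the peeling never depends on a perfect packing of what remains, so that the first phase can stay unchanged and the second phase becomes an unconditional optimal packing of the residual.

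That is false for the paper's AI algorithms. The peeling alone cannot tell true triplet candidates from false ones, i.e.\ items of the original instance that happen to lie in a unique full weight triplet. Nor can it tell a correct guess of $C$, or of $d_1,d_2,S$, from a wrong one. The only thing that prunes wrong branches is the test that the guessed original part packs perfectly into $3$ bins: the filter on $C$ in Algorithm~\ref{alg::naive_AI}, and the base case of Algorithm~\ref{alg:final-AI}. Any successful run of the unmodified algorithm outputs a perfect packing, so on an ANI instance it always fails. Compare the ANI rows of Table~\ref{tab:detailed_AI_fast}: base cases are reached, but Opt is $0$. If you instead drop the test and return an optimal packing of the first residual $R$ reached, that $R$ need not be $\widehat O$ and may need more than $\beta+1$ bins. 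Your count ``$h$ full bins plus $z_{\mathrm{ILP}}^{\widehat O}$'' silently assumes the algorithm lands on $\widehat O$, which it has no way of knowing.

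The adjustment the observation has in mind is to keep the backtracking and change the acceptance test to ``$|R|\le\alpha$, $w(R)=\beta W$, and $R$ packs into $\beta+1$ bins''. This is a constant-time check. Correctness then needs two facts. First, the branch with $R=\widehat O$ (resp.\ $C=\widehat O$) must be reached and survive the peeling; this follows from the argument of Theorem~\ref{th:unique_triplet} with $\widehat O$ in place of $\widehat O'$. Second, every accepted branch uses $h+\beta+1$ bins, which is optimal only by the ANI promise, as you correctly stress.

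For the first fact, the situation is not ``identical'' for $I$ and $I'$ as you assert. Condition~\eqref{cond:no-completion} is stated relative to $\widehat O$, and the split items $o_1,o_2$ of $\widehat O'$ may complete $a_h$. That is precisely why the AI algorithms must guess $d_1,d_2$. On the ANI instance, $a_h$ already lies in a unique full weight triplet, so the peeling is if anything easier there.
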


\subsection{A Naive Polynomial-Time Algorithm for the AI Class}
A naive polynomial-time algorithm for solving \textsc{AI} instances proceeds as follows. Since $O'$ consists of $16$ items, we enumerate all
$\binom{n}{16}$ candidate subsets $C \subseteq I$ of size $16$. For each candidate set $C$,
if $C$ has a perfect packing solution with $3$ bins, which can be computed in constant time as $|C|$ is constant, we attempt to find an optimal solution to $I' = I \setminus C$ using Observation~\ref{ob:residual} and Theorem~\ref{th:unique_triplet}. The latter relies on the concept of a \emph{full weight triplet}, defined as follows.

\begin{definition}
	A \emph{full weight triplet} is a triplet of weights $(w_a,w_b,w_c)$ whose sum equals $W$. Given a weight triplet $(w_a,w_b,w_c)$, the corresponding item triplet $(a,b,c)$ may be any set of distinct items in $I$ with weights $w_a$, $w_b$, and $w_c$, respectively. 
\end{definition}

We say that an item $a \in I$ belongs to a \emph{unique} full weight triplet $(w_a,w_b,w_c)$ if $w_b$ and $w_c$ form the only combination of weights such that $(a,b,c)$ is a full triplet in $I$. Note that uniqueness is defined only with respect to weights: even if multiple items in $I$ share the same weight as $b$ or $c$, the triplet is still considered unique provided no other weight combination is possible.

\begin{observation}~\label{ob:residual}
	Given an original triplet $(a_k, b_k, c_k)$, the instance \(I \setminus \{a_k, b_k, c_k\}\) is an AI instance with \(O'\) as its original instance. 
\end{observation}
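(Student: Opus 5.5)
The plan is to check the statement directly against the definition of the AI class given in Section~\ref{sec::instances}. The key point is that condition~\eqref{cond:no-completion} is monotone: if it holds with respect to a set of items, it still holds with respect to any subset of that set. So removing one triplet cannot create a new full completion for any other $a_g$. It suffices to exhibit the residual instance as the output of the same construction with $h-1$ iterations, followed by the same splitting step.

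Let $I$ be the ANI instance from which the AI instance $I'$ was obtained by splitting $o\in O$ into $o_1,o_2$. Fix $k$. Consider the sequence of triplets $(a_g,b_g,c_g)$ for $g\neq k$, kept in their original relative order and built on the same original instance $O$. For each $g\neq k$, the set of previously introduced items in this new sequence is
\[
\{a_{g'},b_{g'},c_{g'} : g'<g,\ g'\neq k\}\cup O .
\]
This is a subset of the set appearing in~\eqref{cond:no-completion} for $g$ in the original construction. Hence any $S$ completing $a_g$ to weight $W$ in the new sequence would also do so in the old one, which is impossible. Every triplet still has total weight $W$, so $I\setminus\{a_k,b_k,c_k\}$ is produced by the construction with $h-1$ iterations. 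By the stated properties of that construction, it is an ANI instance with original instance $O$, with $z_{\mathrm{LP}}=h+2$ and $z_{\mathrm{ILP}}=h+3$.

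Next, apply the same split of $o$ into $o_1,o_2$; this commutes with removing the triplet, since $o\in O$ and not in the triplet. The result is exactly $I'\setminus\{a_k,b_k,c_k\}$, with original instance $O'$. To confirm the AI values directly, note that the total weight is $3W+(h-1)W=(h+2)W$, so $z_{\mathrm{LP}}\ge h+2$. Conversely, $O'$ has a perfect packing in $3$ bins because $z_{\mathrm{ILP}}^{O'}=3$ and $\sum_{i\in O'}w_i=3W$. Adding one bin per remaining triplet gives a feasible packing with $h+2$ bins. Therefore $z_{\mathrm{LP}}=z_{\mathrm{ILP}}=h+2$.

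The only delicate point is that $k$ need not be the last triplet. One has to make sure the triplets after $k$ do not depend on the presence of $(a_k,b_k,c_k)$. The monotonicity argument above handles this, so I expect no real obstacle beyond stating it carefully. The same argument applies verbatim to generalized $(\alpha,\beta,\gamma)$-AI instances with $\gamma$-tuples.
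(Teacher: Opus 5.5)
Your proof is correct and supplies the justification the paper omits (it states this as an observation without proof). Removing a triplet only shrinks the sets quantified over in condition~\eqref{cond:no-completion}, so the remaining triplets still form a valid construction over $O$, and splitting $o$ into $o_1,o_2$ commutes with the removal. You rightly pass back through the underlying ANI instance instead of asserting \eqref{cond:no-completion} relative to $O'$, which can fail because of $o_1,o_2$.
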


\begin{theorem}\label{th:unique_triplet}
	Suppose \(C = O'\) and $I' \neq \emptyset$. Then:
	\begin{enumerate}[(i)]
		\item There exists at least one item \(a \in I'\) that belongs to exactly one full weight triplet \((w_a, w_b, w_c)\);
		\item For any such item \(a\), this triplet is equal to \((w_{a_k}, w_{b_k}, w_{c_k})\) for some \(k = 1, \ldots, h\).
	\end{enumerate}
\end{theorem}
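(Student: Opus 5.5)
The natural candidate for part (i) is the most recently added item $a_h$, with $I' = \{a_k,b_k,c_k : k=1,\dots,h\}$ being the residual after removing $C=O'$. Condition~\eqref{cond:no-completion} says that no subset of $O \cup \{a_g,b_g,c_g : g<h\}$ completes $a_h$ to weight $W$. I would first check that this condition survives the split of $o$ into $o_1,o_2$. Any subset of $O'$ using $o_1$ or $o_2$ but not both corresponds to no subset of $O$, so the condition does not transfer verbatim. I would therefore argue within $I'$ alone, since the theorem only concerns full weight triplets formed from items of $I'$.

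Consider any full weight triplet $(w_{a_h},w_x,w_y)$ with $x,y \in I'\setminus\{a_h\}$. If neither $x$ nor $y$ has weight $w_{b_h}$ or $w_{c_h}$, then $x,y$ can be chosen (by weight, as the Observation allows) among earlier-tuple items $\{a_g,b_g,c_g : g<h\}$. Their weights then complete $a_h$ using previously introduced items, which contradicts \eqref{cond:no-completion}. Hence one of them, say $x$, has weight $w_{b_h}$ (or $w_{c_h}$), and then $w_y = W - w_{a_h} - w_{b_h} = w_{c_h}$. So the only weight combination is $(w_{a_h},w_{b_h},w_{c_h})$. That triplet is realized by the items $a_h,b_h,c_h$, so $a_h$ belongs to exactly one full weight triplet, which proves (i).

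There is a subtlety: an item $x$ of weight $w_{b_h}$ might coincide in weight with an earlier item. This does not hurt, because uniqueness is by weights and the case split is on weights. I would phrase the argument as ``if $\{w_x,w_y\}\cap\{w_{b_h},w_{c_h}\}=\emptyset$ then both weights occur among earlier items''. This needs care: a weight could occur among earlier items only once while both $x,y$ have that weight. So I would require two distinct earlier items with those weights, which holds because $x,y$ are distinct items of $I'$ not in tuple $h$.

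For (ii), let $a \in I'$ belong to exactly one full weight triplet. The item $a$ lies in some tuple $k$ as $a_k$, $b_k$ or $c_k$, and the tuple $(w_{a_k},w_{b_k},w_{c_k})$ is itself a full weight triplet containing $w_a$, realized by distinct items of $I'$. By uniqueness, the triplet containing $a$ must equal this tuple's weights, giving (ii) immediately. This assumes $\gamma=3$, which matches the AI setting here.

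The main obstacle I expect is the item-versus-weight bookkeeping in (i): ensuring that a completing pair, taken by weight, is realized by items outside tuple $h$ so that \eqref{cond:no-completion} applies. I would resolve it by choosing actual items $x,y \in I'$ and noting that if neither is $b_h$ or $c_h$ as items, they are earlier items. If one of them is $b_h$ or $c_h$ as an item, the weight argument above forces the triplet's weights.
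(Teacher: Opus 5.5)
Your proposal is correct and follows essentially the same route as the paper. The item $a_h$ witnesses (i), because condition~\eqref{cond:no-completion} forces every completing pair to contain $b_h$ or $c_h$, which then fixes the remaining weight. Part (ii) follows because every item of $I'$ already lies in its own original triplet, so uniqueness forces that triplet's weights. The paper argues (ii) by contradiction; your direct argument is the same idea stated more cleanly. Restricting the argument to $I' = I \setminus O'$ to sidestep the split of $o$, and working with actual items rather than weights, are careful refinements of that argument, not a different approach.
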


\begin{proof}
	To prove (i), note that since \(C = O'\), the original triplet \((a_h, b_h, c_h)\) is contained in \(I'\), so \(a_h \in I'\) belongs to at least one full triplet. Additionally, observe that by condition~\eqref{cond:no-completion}, \(a_h\) cannot form a full triplet using only items in \(I' \setminus \{a_h, b_h, c_h\}\). Given a full triplet containing both \(a_h\) and \(b_h\) (the argument is analogous for \(a_h\) and \(c_h\)), the third item must have weight \(w_{c_h}\). Hence, every full triplet containing \(a_h\) has the form \((w_{a_h}, w_{b_h}, w_{c_h})\).
	
	For (ii), suppose, by contradiction, that there exists an item \(a \in I'\) and a full weight triplet \((w_a, w_b, w_c)\) that is not equal to \((w_{a_k}, w_{b_k}, w_{c_k})\) for any \(k = 1, \ldots, h\). Since every \(a \in I'\) belongs to at least one original triplet, this implies that \(a\) belongs to at least two distinct full weight triplets, contradicting (i). Therefore, an item \(a\) satisfying (i) also satisfies (ii).
\end{proof}

With these results, for each candidate set \(C\) that fits exactly into three bins, we attempt to recover the \(h\) triplets from \(I' = I \setminus C\) as follows. If there is an item \(a \in I'\) satisfying condition (i), we fix its corresponding triplet $(a, b, c)$ in a partial solution $\mathcal{S}$ and apply the argument recursively to the residual instance $I'\setminus \{a, b, c\}$. If $C = O'$, then only original triplets are removed, and the item $a_k$ with the largest index \(k\) in the residual instance always satisfies condition (i), which makes $\mathcal{S}$ correspond to an optimal solution to $I'$ after $h$ iterations. Concatenating this solution with the one for \(C\) produces an optimal solution for \(I\). If at some iteration, for a given candidate $C$, the residual instance is nonempty and no item satisfies condition (i), then the candidate set $C$ is not equal to \(O'\), and we proceed to the next one.

\begin{algorithm}[H]
	\small
	\setstretch{1}
	\caption{Naive Polynomial-Time Algorithm for an AI Instance}
	\label{alg::naive_AI}
	\DontPrintSemicolon
	\KwIn{Instance $I$}
	\KwOut{An optimal solution for $I$ or Fail}
	\ForEach{subset $C \subseteq I$ with $|C| = 16$}{
		\If{$C$ does not have a perfect packing solution with $3$ bins}{
			continue
		}
		$I' \gets I \setminus C$
		    
		$\mathcal{S} \gets$ an optimal solution for $C$ \tcp*{Partial solution for $I$}
		    
		Compute all full weight triplets in $I'$
		    
		For each $a \in I'$, let $\tau(a)$ be the number of full weight triplets containing $a$
		    
		\While{$I' \neq \emptyset$}{
			Let $a^* \in \arg\min_{a \in I'} \tau(a)$
			        
			\If{$\tau(a^*) \neq 1$}{
				\textbf{break} \tcp*{Wrong choice of $C$}
			}
			Let $(w_{a'}, w_{b'}, w_{c'})$ be the unique full weight triplet containing $a^*$
			        
			$\mathcal{S} \gets \mathcal{S} \cup \{(a', b', c')\}$
			        
			$I' \gets I' \setminus \{a', b', c'\}$
			        
			Update $\tau(a)$ for all $a \in I'$
		}
		\If{$I' = \emptyset$}{
			\Return $\mathcal{S}$ \tcp*{Optimal solution found}
		}
	}
	\Return Fail \tcp*{It never happens in an AI instance}
\end{algorithm}


Algorithm~\ref{alg::naive_AI} formalizes this process. Lines~2--5 run in \(O(n)\) time, while lines~6--7 run in \(O(n^2)\) time, since the number of full weight triplets is bounded by \(O(n^2)\); once the first two weights are fixed, the third is uniquely determined. Each iteration of the \texttt{while} loop runs in $O(n)$, and the maximum number of iterations is $h \in O(n)$. Therefore, the overall running time of the algorithm is $O(n^{18})$. Although this is polynomial, it is clearly impractical for instances reported in the literature, where typically $n \geq 202$.

For the \((\alpha,\beta,\gamma)\)-\textsc{AI} instances, since all \(\beta\)-tuples with total weight \(W\) can be enumerated in time \(O(n^{\beta - 1})\), and \(|C'| = \alpha + 1\), which allows all candidate sets to be enumerated in time \(O(n^{\alpha + 1})\), an analogous algorithm can be obtained with overall time complexity \(O(n^{\alpha + \beta})\).

\subsection{A Second Polynomial-Time AI Algorithm}~\label{sec::AI_2}

For the case in which condition~\eqref{cond:a_is_large} is satisfied, which we recall is verified for all benchmark instances in the literature, we can design a more efficient algorithm. Let \(L\) denote the set of \emph{large} items \(i \in I\) with \(w_i \geq W/2\). We identify all items \(a_k\) by exploiting the observation that \(\{a_k\colon k \in \{1, \ldots, h\}\} \subseteq L\). By Remark~\ref{remark:distinct_weights}, we can define \(L\) so that it contains at most one item per distinct weight. Since the original instance \(O'\) satisfies \(z_{\mathrm{ILP}} = 3\), we have that \(|L| \leq h + 3\).

Let \(\mathcal{T}\) be the set of all full weight triplets, and let \(\tau(i)\), for each $i \in I$, denote the number of full weight triplets in \(\mathcal{T}\) in which item \(i \in I\) appears. For $k = 1,2,3$, let $A_k \subseteq L$ be the subset of large items $i$ such that $\tau(i) = k$. Additionally, we observe that for all literature instances \(|L| \in \{h+1,\, h+2\}\), which stems from the fact that the selected original instances from~\citet{Caprara_2015} contain no three large items.

Observe that, if we compute the sets \(A_k\) with respect to instance \((I \setminus O') \cup O\) (that is, the original \textsc{ANI} instance), then \(a_h \in A_1\) by definition of the instance. However, this property does not necessarily hold for instance \(I\), since items \(o_1\) and \(o_2\) may introduce additional triplets involving \(a_h\). We address this issue by considering a modified instance \(I' = I \setminus \{d_1, d_2\}\) and the corresponding set \(L' = L \setminus \{d_1, d_2\}\), where \(d_1\) and \(d_2\) are two arbitrary items. There are \(\binom{n}{2}\) possible choices for pair \((d_1, d_2)\), and if \((d_1, d_2) = (o_1, o_2)\), then \(a_h \in A_1\) in the instance \(I'\).

An item \(a \in A_1\) is called a \emph{triplet candidate}, since there is a unique way to pack \(a\) into a full weight triplet. Moreover, if \(a = a_k\) for some \(k \in \{1, \ldots, h\}\), we call it a \emph{true triplet candidate}. It is also possible that \(a \in A_1\) and \(a \in O'\). In this case, we do not know whether \(a\) must be packed into a triplet, and we therefore call such items \emph{false triplet candidates}. The number of false triplet candidates is equal to \(|L'| - h\) and we can enumerate all subsets \(S \subseteq L'\) of size \(|L'| - h\) in time complexity \(O(n^{|L'| - h})\), which in the worst case is \(O(n^3)\).

For at least one such subset \(S\), all triplet candidates in the instance \(I'' = I' \setminus S\) are true triplet candidates. For this instance \(I''\), we select an item \(a \in A_1\), fix its corresponding triplet \(T\) in a partial solution, remove the items of \(T\) from \(I''\), and repeat this procedure. After \(h\) iterations, all original triplets are recovered, and the remaining items together with \(\{d_1, d_2\} \cup S\) form the original instance \(O'\).

Observe that this algorithm removes up to five items from \(I\). Therefore, there are \(O(n^5)\) possible candidates for the set \(\{d_1, d_2\} \cup S\). If we initially compute the set of all full weight triplets \(\mathcal{T}\) and the sets \(A_k\), then, for a given candidate \(\{d_1, d_2\} \cup S\), we can obtain the set of valid triplets \(\mathcal{T}'\) for the instance \(I \setminus (\{d_1, d_2\} \cup S)\) in time complexity \(O(n^2)\) by removing invalid triplets from \(\mathcal{T}\) and updating the corresponding sets \(A_k\).

If there exists an item \(a \in A_1\), we take the unique triplet \(T\) containing \(a\), fix it in the partial solution, and remove its items. If \(A_1 = \emptyset\), then the candidate set \(\{d_1, d_2\} \cup S\) is incorrect, and we abort the current iteration and proceed to the next candidate set.

The algorithm is deemed successful if it recovers \(h\) triplets and the resulting instance, including both removed and remaining items, can be packed into three bins. For each candidate set \(\{d_1, d_2\} \cup S\), at most \(n\) iterations are performed, each with amortized time complexity \(O(n)\). To see this, recall that \(O(n^2)\) triplets are considered initially. In each iteration, we identify an item \(a\) that appears in the fewest triplets, which requires \(O(n)\) time, and then fix its associated triplet \((a,b,c)\) by removing all invalid triplets. Over all iterations, this invalidation step takes \(O(n^2)\) time in total. Consequently, the overall time complexity of the algorithm is \(O(n^7)\). This bound can be expressed as a function of \(|L|-h\), yielding a time complexity of \(O(n^{4+|L|-h})\). Since \(|L|-h \in \{1,2\}\) for the literature instances, the time complexity is reduced to \(O(n^5)\) or \(O(n^6)\), representing a significant improvement over Algorithm~\ref{alg::naive_AI}.

Finally, for the generalized \textsc{AI} instances that satisfy condition~\eqref{cond:a_is_large}, the algorithm is sensitive to both the size of \(S\) and the number of items \(\gamma\) per tuple. Given an \((\alpha, \beta, \gamma)\)-\textsc{AI} instance and noting that \(|S| \leq |L| - h \leq \beta\), there are \(O(n^{\beta + 2})\) candidate sets. For each candidate set, the algorithm enumerates all full patterns consisting of \(\gamma\)-tuples, whose number is bounded by \(O(n^{\gamma - 1})\). Since \(\gamma \geq 3\), this enumeration dominates the running time for each candidate. Therefore, the overall time complexity is \(O(n^{\beta + \gamma + 1})\).

\subsection{A Practical Polynomial-Time AI Algorithm}
\label{sec::AI_3}
The second \textsc{AI} algorithm performs significantly better than the first when condition~\eqref{cond:a_is_large} holds, but its runtime is still not practical. Fortunately, candidate sets must satisfy additional structural properties to be considered promising. Exploiting these properties allows us to devise an improved algorithm that, in practice, examines only a very small number of candidate items, as discussed below.

\begin{algorithm}[H]
	\setstretch{1}
	\small
	\caption{Practical AI Solver Algorithm} 
	\label{alg:final-AI}
	\DontPrintSemicolon
	\SetKwInOut{Input}{Input}
	\SetKwInOut{Output}{Output}
	\SetKwProg{Fn}{Function}{:}{}
	\Fn{$\mathrm{PracticalAISolver}(I, \mathcal{T}, \tau, \mathcal{A}, W, n_s, n_b, R, \mathcal{S})$}{
		\Input{Residual instance $I$, triplets $\mathcal{T}$, function $\tau$, the family of sets $\mathcal{A} = \{A_1, A_2, A_3\}$, capacity $W$, split count $n_s$, large count $n_b$, removed items $R$, partial solution $\mathcal{S}$}
		\Output{An optimal solution for \(I\) together with the items in \(\mathcal{S}\), or \textsc{Fail}}
		
		\If{$I = \emptyset$}{
			\tcp{All remaining items are in $R$}
			\If{There is a perfect packing solution $\mathcal{S}_R$ for $R$ using 3 bins}{
				\Return $\mathcal{S} \cup \mathcal{S}_R$
			}
			\Return Fail
		}
		\If{$\mathcal{T} = \emptyset$  \textbf{or} $n_b > n_b^{\max}$  \textbf{or} $|R| > 16$ \textbf{or} $\sum_{i \in R} w_i > 3W$}{
			\Return Fail
		}
		
		\eIf{$A_1 = \emptyset$}{
			\If{$n_s = 2$}{
				\Return Fail
			}
			$A \gets A_2$
			    
			\If{$n_s = 0$}{
				$A \gets A \cup A_3$
			}
			
			\ForEach{$a \in A$}{
				$D_a \gets \{ d \in I : \exists T \in \mathcal{T} \text{ with } \{a,d\} \subseteq T \}$
			}
			$\mathcal{D} \gets \bigcup_{a \in A} D_a$
			
			\ForEach{$d \in \mathcal{D}$}{
				$(I', \mathcal{T}', \tau', \mathcal{A}', R', n_b') \gets \mathrm{RemoveItem}(I, \mathcal{T}, \tau, \mathcal{A}, R, n_b, d)$
				        
				$sol \gets \mathrm{PracticalAISolver}(I', \mathcal{T}', \tau', \mathcal{A}', W, n_s+1, n_b', R', \mathcal{S})$
				        
				\If{$sol \neq \mathrm{Fail}$}{
					\Return $sol$
				}
			}
			\Return Fail
			}{
			Select $a \in A_1$ and let $T = (w_a,w_b,w_c)$ be the triplet containing $a$
			    
			$(I', \mathcal{T}', \tau', \mathcal{A}', R', n_b') \gets \mathrm{Removeitems}(I, \mathcal{T}, \tau, \mathcal{A}, R, n_b, \{a, b, c\})$
			    
			$sol \gets \mathrm{PracticalAISolver}(I', \mathcal{T}', \tau', \mathcal{A}', W, n_s, n_b', R', \mathcal{S} \cup \{(a, b, c)\})$
			    
			\If{$sol \neq \mathrm{Fail}$}{
				\Return $sol$
			}
			Remove $T$ from $\mathcal{S}$
			    
			$(I', \mathcal{T}', \tau', \mathcal{A}', R', n_b') \gets \mathrm{Removeitem}(I, \mathcal{T}, \tau, \mathcal{A}, R, n_b, a)$
			    
			\Return $\mathrm{PracticalAISolver}(I', \mathcal{T}', \tau', \mathcal{A}', W, n_s, n_b', R', \mathcal{S})$
		}
	}
\end{algorithm}


The first observation is that if $A_1 \neq \emptyset$ for the current instance $I$, then it is unnecessary to immediately select both $d_1$ and $d_2$. More precisely, one can select an item $a \in A_1$ and its corresponding triplet $T$, and then decide either to add $T$ to the partial solution or to remove $a$ (assuming $a \in O'$). The latter option can occur at most three times.

Additionally, if \(A_1 = \emptyset\), it is not necessary to consider all items as candidates for \(d_1\). Instead, we restrict attention to the set of items $\mathcal{D}$ such that for each $d \in \mathcal{D}$ there exists an item \(a \in A_2 \cup A_3\) such that \(a\) and \(d\) belong to the same full weight triplet. Observe that both \(d_1\) and \(d_2\) must belong to $\mathcal{D}$, since $\mathcal{D}$ consists exactly of the items whose removal results in a non-empty set \(A_1\). A similar procedure is applied to select $d_2$ when $A_1 = \emptyset$ but $d_1$ has already been chosen; in this case, the selection is restricted to items from patterns in $A_2$.

Algorithm~\ref{alg:final-AI} illustrates how these ideas are implemented in practice. The algorithm is a recursive backtracking procedure with a polynomially bounded number of recursive calls. It receives as input the instance $I$ (or the residual instance, since items may be removed during the process), the set $\mathcal{T}$ of full weight triplets, the function $\tau$ mapping each item $i$ to the number of triplets in which it appears, the family $\mathcal{A}$ of sets $A_k = \{ i \in I\colon \tau(i) = k \}$ for $k \in \{1, 2, 3\}$, the bin capacity $W$, a counter $n_s$ indicating how many split items ($d_1$ or $d_2$) have already been chosen, a counter $n_b$ indicating how many large items cannot be packed using triplets, the set $R$ of removed items corresponding to the original instance $O'$, and the partial solution $\mathcal{S}$. It uses the constant $n_b^{\max}$ defined as $|L| - h$, which corresponds to the number of large items in $L$ that belong to the original instance.

Line~2 checks whether the residual instance is empty, which indicates that $h$ triplets have been found. In this case, it remains to verify whether $R$ fits into three bins. If so, an optimal solution for the complete instance is obtained, and failure is reported otherwise. Lines~6--7 test the existence of triplets that pack $I$ and verify whether $R$ is consistent with a valid original instance.

If $A_1 = \emptyset$, we enter the condition at line~9 and must choose an item to be removed to make $A_1$ nonempty, which is possible only if $n_s < 2$. The set $\mathcal{D}$ contains all items that may correspond to $d_1$ or $d_2$. The function \textsc{RemoveItem} recomputes the input arguments after removing an item $d \in \mathcal{D}$. This operation may remove additional items, since $R$ contains all items that cannot or should not be packed into triplets; after removing $d$, some items may no longer belong to any triplet and must therefore be added to $R$. A recursive call is then performed, which either fails or returns an optimal solution.

Line~25 handles the case where there exists an item $a \in A_1$. In this case, we select the triplet $T = (w_a,w_b,w_c)$ and add it to the partial solution $\mathcal{S}$, then remove its items and recompute the input arguments for the recursive call. Note that removing $(a,b,c)$ may force the removal of additional large items, since $b$ or $c$ may also belong to their unique triplet. For this reason, $R$ and $n_b'$ must be recomputed. Line~27 performs the recursive call. If it fails, this may indicate that $a$ is a false triplet candidate; in that case, $a$ is removed and another recursive call is executed. If $n_b = n_b^{\max}$ in the current call, then the recursive call immediately reports failure at the base case.

In practice, the algorithm examines only a small number of candidates, empirically bounded by $k n$ for some small constant $k$. This represents a substantial reduction from the $O(n^5)$ possibilities of the naive approach. For a given candidate, if the function arguments are passed by reference and the arguments for the next call are computed from the current ones while keeping track of modifications to be undone after the call returns, the total time spent along any root-to-leaf path (i.e., from the initial call to a base case where $h$ triplets are recovered) is $O(n^2)$, as in the previous algorithm. Therefore, in practice, the algorithm exhibits a running time close to $O(n^3)$.

\section{A Pseudopolynomial Algorithm for the ANI Class}~\label{sec::ANI}
As shown in the previous section, instances of the \textsc{AI} class can be solved in polynomial time. It would be desirable to obtain a similar result for the \textsc{ANI} class. However, the \textsc{AI} class has a key property: a perfect packing solution exists, and its optimality follows directly from the structure of such a packing. In contrast, an instance $I$ of the \textsc{ANI} class does not possess this property. Indeed, to prove that a solution using \(z_{\mathrm{LP}}^I + 1\) bins is optimal, one must show that no perfect packing solution exists. This additional requirement drastically increases the difficulty of designing a polynomial-time algorithm for this class, an achievement that we were not able to attain. Nevertheless, it is possible to derive a pseudopolynomial-time algorithm that, with some additional techniques, performs significantly faster than state-of-the-art methods on this class. 

Similarly to the \textsc{AI} case, we first present a theoretical pseudopolynomial
algorithm for the \textsc{ANI} class. We then show how additional information can be
used to obtain faster algorithms in practice. The algorithms are presented for the BPP, but they can be easily adapted for the SSP, as explained at the end of the section.

\subsection{A Pseudopolynomial Time Algorithm for the \texorpdfstring{$(\alpha, \beta, 3)$}{(alpha, beta, 3)}-ANI Classes}
For an \textsc{ANI} instance \(I\), we have
\[
	\sum_{i \in I} w_i = z_{\mathrm{LP}}^I \in \mathbb{Z}_+,
\]
which may initially suggest the existence of a solution using \(z_{\mathrm{LP}}^I\) bins, that is, a perfect packing. However, in contrast to the \textsc{AI} class, it is not sufficient to simply identify a candidate set \(C \subseteq I\) corresponding to the original instance $O$, since this does not rule out the existence of perfect packing solutions that use patterns combining items from \(O\) and \(I \setminus O\).

For this reason, we address the \textsc{ANI} class by applying a sequence of reductions in which patterns that must appear in at least one perfect packing solution, if such a solution exists, are fixed. At the end of this process, we obtain a reduced instance of constant size for which it is straightforward to verify that no perfect packing solution exists. In addition, a solution for the reduced instance using one additional bin can be found in constant time, which yields a solution for \(I\) using \(z_{\mathrm{LP}}^{I} + 1\) bins and is therefore optimal.

While the algorithms for \textsc{AI} instances extend naturally to generalized \textsc{AI} instances, our algorithm for the \textsc{ANI} class is restricted to $(\alpha,\beta,3)$-\textsc{ANI} instances, and this restriction is fundamental. Conceptually, our approach relies on the fact that if we identify a pair of items that appear together in a perfect packing solution, whenever such a solution exists, then the third item is uniquely determined. In contrast, for $\gamma$-tuples with $\gamma > 3$, this structural property no longer holds, and it is unclear whether an analogous algorithm can be derived.

\begin{theorem}~\label{theorem::partition_triplet} Given an original-irreducible \((\alpha,\beta,3)\)-ANI instance (see definition~\ref{definition::irreducible}). Let \(a,b,c \in I\) be pairwise distinct items such that
	(i) \((a,b,c)\) is a full triplet, and
	(ii) there exists no full pattern containing \(a\) and only items from \(I \setminus \{a,b,c\}\).
	Then there exist pairwise distinct items \(a', b', c' \in I\) such that \(w_{a'} = w_a\), \(w_{b'} = w_b\), \(w_{c'} = w_c\), and either \(\{a', b', c'\} \subseteq O\) or \(\{a', b', c'\} \subseteq I \setminus O\).
\end{theorem}

\begin{proof}
Let \(a,b,c \in I\) satisfy the hypotheses (i) and (ii). We have two cases.

\smallskip
\noindent Case 1: $a \in I \setminus O$. 
Then \(w_a\) must coincide with one of the weights in an original triplet, that is,
\(w_a \in \{w_{a_k}, w_{b_k}, w_{c_k}\}\)
for some \(k \in \{1,\ldots,h\}\).
Without loss of generality, assume \(w_a = w_{a_k}\).
If \(w_b \in \{w_{b_k}, w_{c_k}\}\) or \(w_c \in \{w_{b_k}, w_{c_k}\}\), then, as $(a, b, c)$ is a full triplet, 
$\{w_b, w_c\} = \{w_{b_k}, w_{c_k}\}$, and triplet
\((a_k,b_k,c_k)\) satisfies the claim.
Otherwise, \(b_k,c_k \in I \setminus \{a,b,c\}\) and, 
in this case, \((a,b_k,c_k)\) forms a full pattern containing \(a\) and using only items from
\(I \setminus \{a,b,c\}\), contradicting assumption~(ii).
              
\smallskip
\noindent Case 2: $a \in O$. 
If \(b,c \in O\), the result follows. Otherwise, assume \(c \in I \setminus O\). 
If the same were true for \(b\) (i.e., \(b \in I \setminus O\)), then only \(a \in O\), 
and since every item in \(O\) must belong to at least one full pattern \(P \subseteq O\), 
this would contradict assumption~(ii). Therefore, \(b \in O\) and \(c \in I \setminus O\).

Moreover, if there exists a pattern \(P \subseteq O\) containing \(a\) but not \(b\), then $P$ contradicts assumption~(ii) again. Hence, every full pattern \(P \subseteq O\)
containing \(a\) also contains \(b\), which is a contradiction since $I$ is an original-irreducible instance. Consequently, the only feasible possibility is $a,b,c \in O$. \qedhere
\end{proof}

Lemma~\ref{lemma::full_pair} and Theorem~\ref{theorem::mandatory_triplet} show that if a perfect packing solution exists for an instance $I$ of an arbitrary class, then a triplet $(a, b, c)$ satisfying the hypothesis of Theorem~\ref{theorem::partition_triplet} must belong to at least one perfect packing solution.

\begin{lemma}~\label{lemma::full_pair}
	Given an arbitrary instance $I$ of the BPP, if there are two items \(e\) and \(f\) such that \(w_e + w_f = W\), then there is at least one optimal solution where \(e\) and \(f\) are packed together. For CSP and SSP, there is an optimal solution where the pattern \((e, f)\) is used \(\min(d_e, d_f)\) times if \(e \neq f\) or \(\lfloor d_e / 2 \rfloor\) times if \(e = f\).
\end{lemma}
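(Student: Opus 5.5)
The plan is a standard exchange argument. Start from an arbitrary optimal solution $\mathcal{S}$ of the BPP instance $I$. If $e$ and $f$ share a bin, we are done. Otherwise $e$ lies in a bin $B_e$ and $f$ in a different bin $B_f$. Let $R_e = B_e \setminus \{e\}$ and $R_f = B_f \setminus \{f\}$, so
\[
w(R_e) \le W - w_e = w_f \quad\text{and}\quad w(R_f) \le W - w_f = w_e .
\]
Replace $B_e$ and $B_f$ by the two bins $\{e,f\}$ and $R_e \cup R_f$. The first bin has weight exactly $W$. The second has weight at most $w_f + w_e = W$. Both new bins are therefore feasible, and the number of bins does not increase; if $R_e \cup R_f$ is empty it can even be discarded. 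The new solution is optimal and packs $e$ with $f$.

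For the CSP, apply the same swap repeatedly. Define the potential $\phi$ as the number of copies of the pattern $(e,f)$ used in an optimal solution. Take an optimal solution maximizing $\phi$, and suppose $\phi < \min(d_e,d_f)$ when $e \neq f$, or $\phi < \lfloor d_e/2\rfloor$ when $e = f$. Then some copy of $e$ and some copy of $f$ lie outside the $(e,f)$ bins, and they are distinct copies.
\begin{itemize}
\item If these two copies share a bin, that bin contains $e$ and $f$ of total weight $W$, so its other items have weight zero. Dropping those items (or assuming positive weights) turns the bin into pattern $(e,f)$.
\item Otherwise, the swap above converts the two bins into one $(e,f)$ bin and one feasible bin.
\end{itemize}
In either case $\phi$ increases without increasing the bin count, which contradicts maximality. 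Since $\phi$ cannot exceed these bounds, this gives the stated counts.

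For the SSP (maximization, bins must reach weight at least $W$), the exchange is reversed. Given distinct rolls $B_e \ni e$ and $B_f \ni f$, form $\{e,f\}$, which has weight exactly $W$, and $R_e \cup R_f$. The concern is that $R_e \cup R_f$ might fall below $W$. We handle this by not requiring it to be a roll: we simply leave its items unused, which is allowed since each multiplicity $m_i$ is an upper bound. The count then drops from two rolls to one, so the plain swap is insufficient.

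Instead, take $R_e \cup R_f$ together with all other leftover items. Observe that $w(R_e) \ge W - w_e = w_f$ and $w(R_f) \ge w_e$, so $w(R_e \cup R_f) \ge W$. Hence $R_e \cup R_f$ is itself a valid roll, and the swap preserves the roll count. The exchange and the potential argument then go through exactly as in the CSP case. Only the directions of the inequalities flip, with $d$ replaced by $m$.

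The main obstacle is the bookkeeping in the multi-copy case, in particular ensuring that the two selected copies are distinct when $e = f$. This is exactly why the bound there is $\lfloor d_e/2\rfloor$: at least two unpaired copies must remain for another pattern $(e,e)$ to be formed.
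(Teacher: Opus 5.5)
Your proposal is correct and is essentially the paper's argument. The paper swaps $f$ with the other contents of $e$'s bin, which yields exactly your bins $\{e,f\}$ and $R_e \cup R_f$; it repeats this over copies for the CSP, and for the SSP it uses the reversed inequality, stated there as ``$I'$ has total weight at least $w_f$'', which is your $w(R_e)\ge w_f$, $w(R_f)\ge w_e$.

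Two small points to tidy. In the SSP, a copy of $e$ or $f$ may be left unused, since multiplicities are only upper bounds, and you do not handle that case. It is disposed of directly: replace the roll containing the used item by $\{e,f\}$, and if both are unused, adding the roll $\{e,f\}$ would contradict optimality. Also, drop the detour claiming the plain swap is insufficient, since your own inequality shows that it suffices.
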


\begin{proof}
	We first consider a BPP instance $I$. Let $S$ be a feasible (but not necessarily optimal) solution for $I$ using $K$ bins. If $e$ and $f$ are not packed together, let $I'$ denote the (possibly empty) set of items packed together with $e$. Note that the total weight of $I'$ is at most $w_f$. Since $f$ is packed in a different bin, we can swap $f$ with $I'$ to obtain another feasible solution $S'$ using at most $K$ bins (the original bin containing $f$ may become empty if $I' = \emptyset$). If $S$ is optimal, then $S' \setminus \{(e,f)\}$ is an optimal solution for $I \setminus \{e, f\}$.
	
	For the CSP case, analogous arguments apply to copies of $e$ and $f$ that are not packed together. For the SSP, the argument is similar to the CSP case, with $I'$ now denoting a set of items whose total weight is at least $w_f$.
\end{proof}
\begin{theorem}\label{theorem::mandatory_triplet}
	Let $I$ be an arbitrary instance of the BPP, and let $a,b,c \in I$ be pairwise distinct items such that $(a,b,c)$ is a full triplet and there exists no full pattern containing $a$ that uses only items from $I \setminus \{a,b,c\}$. If a perfect packing solution exists for $I$, then there exists a perfect packing solution for $I$ that includes the triplet $(a,b,c)$.
\end{theorem}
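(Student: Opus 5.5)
Fix a perfect packing solution $\mathcal{S}$ for $I$. Every bin of $\mathcal{S}$ is a full pattern. Let $P$ be the bin containing $a$. I will transform $\mathcal{S}$ into a perfect packing in which $a$, $b$ and $c$ share a bin, using exchange moves in the spirit of Lemma~\ref{lemma::full_pair}.

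First I show that $P$ intersects $\{b,c\}$. Otherwise $P$ is a full pattern containing $a$ whose other items all lie in $I\setminus\{a,b,c\}$, which contradicts the hypothesis. If $P=\{a,b,c\}$ we are done. Since $w_a+w_b+w_c=W$ and all weights are positive, $P$ cannot strictly contain $\{a,b,c\}$. So two cases remain. Either $P$ contains exactly one of $b,c$, or it contains both together with other items. The second case is impossible, because the weight would exceed $W$. Hence, up to symmetry, $P=\{a,b\}\cup Q$, where $c\notin P$, $Q\subseteq I\setminus\{a,b,c\}$ and $w(Q)=W-w_a-w_b=w_c$. Here $Q\neq\emptyset$ because $w_c>0$.

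Let $P'$ be the bin containing $c$. Swap $Q$ and $c$: put $c$ in place of $Q$ in $P$, and $Q$ in place of $c$ in $P'$. Because $w(Q)=w_c$, both new bins still have weight exactly $W$. The result is again a perfect packing, with the same number of bins, and it contains the triplet $(a,b,c)$. This is the same swap argument used in the proof of Lemma~\ref{lemma::full_pair}, applied to the pair $(\{a,b\},c)$: the set $\{a,b\}$ has weight $W-w_c$.

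I do not expect a real obstacle. The one step needing care is ruling out the configuration where $P$ contains neither $b$ nor $c$, since that is exactly where hypothesis (ii) is used. I must also note that $P$ and $P'$ are distinct bins, which holds because $c\notin P$, so the swap is well defined. The positivity of weights is needed to exclude $P\supsetneq\{a,b,c\}$ and $Q=\emptyset$. If zero-weight items were allowed, they could simply be moved out of $P$ together with $Q$ without changing any weights.
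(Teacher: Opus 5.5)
Your proof is correct and is essentially the paper's argument: the paper merges $a$ and $b$ into a single item of weight $w_a+w_b=W-w_c$ and applies the swap from Lemma~\ref{lemma::full_pair}, which is exactly your direct exchange of $Q$ with $c$. Your version is slightly more careful, since you use positivity of weights explicitly to rule out a bin that strictly contains $\{a,b,c\}$, a case the paper does not address.
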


\begin{proof}
	If $(a,b,c)$ is the only full pattern containing $a$, the claim follows immediately.  
	Otherwise, any other pattern containing $a$ must include either $b$ or $c$, since there exists no full pattern containing $a$ and only items in $I \setminus \{a,b,c\}$.  
	Thus, assume a perfect packing solution exists in which $a$ is packed with $b$ and consider the instance $I'$ obtained by replacing $a$ and $b$ with a single item $d$ of weight $w_d = w_a + w_b$. Clearly, $I'$ also admits a perfect packing solution. By Lemma~\ref{lemma::full_pair}, there exists a perfect packing solution for $I'$ that includes both $d$ and $c$, which corresponds to a perfect packing solution for $I$ containing the pattern $(a,b,c)$. Finally, a symmetric argument holds if $a$ is packed with $c$. Therefore, if a perfect packing solution exists for $I$, there exists one that uses the pattern $(a,b,c)$.
\end{proof}

Lemmas~\ref{lemma::mandatory_triplet_lp} and ~\ref{lemma::reduction_ANI} establish the correctness of the reduction utilized by our algorithms.

\begin{lemma}\label{lemma::mandatory_triplet_lp}
	Let $I$ be an instance with bin capacity $W$, and let $a,b,c \in I$ be pairwise distinct items such that $(a,b,c)$ is a full triplet and no full pattern containing $a$ uses only items from $I \setminus \{a,b,c\}$. If there exists an optimal SCF LP solution $x$ with $z(x)=\sum_{i\in I} w_i/W \in \mathbb{Z}_+$, then there exists an optimal LP solution $x'$ with $x'_{(a,b,c)}=1$.
\end{lemma}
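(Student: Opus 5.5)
The plan is to mirror the combinatorial argument of Theorem~\ref{theorem::mandatory_triplet} at the level of fractional solutions. Let $x$ be an optimal SCF LP solution with $z(x)=\sum_{i\in I} w_i/W$.

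\textbf{Step 1: normalize $x$ to an exact cover by full patterns.} Since each pattern has weight at most $W$, we have
\[
\sum_i w_i \le \sum_p x_p \sum_i a_{ip} w_i \le W\sum_p x_p = \sum_i w_i .
\]
Both inequalities are therefore tight. Two consequences follow:
\begin{itemize}
\item every pattern in the support of $x$ is full;
\item the covering constraints \eqref{scf:dem} hold with equality for every item of positive weight.
\end{itemize}
Hence $x$ is a fractional perfect packing: a convex-like combination of full patterns that covers each item exactly once. If zero-weight items exist, I will ignore or remove them first.

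\textbf{Step 2: classify the support patterns containing $a$.} Let $\mathcal{P}_a$ be the support patterns containing $a$; their values sum to exactly $1$. By hypothesis, every full pattern containing $a$ other than $(a,b,c)$ contains $b$ or $c$. Any full pattern containing $a$, $b$ and $c$ must equal $(a,b,c)$, since these three already weigh $W$. So each other pattern $P\in\mathcal{P}_a$ contains exactly one of $b,c$.

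Split accordingly:
\begin{itemize}
\item $P=\{a,b\}\cup Q$ with $w(Q)=w_c$;
\item $P=\{a,c\}\cup Q$ with $w(Q)=w_b$.
\end{itemize}

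\textbf{Step 3: fractional version of the exchange in Lemma~\ref{lemma::full_pair}.} Take a pattern $P=\{a,b\}\cup Q$ with value $\epsilon>0$. Since $c$ is covered exactly once and is not in $P$, $c$ appears in support patterns not containing $a$ with total value at least $\epsilon$.

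Pick such a pattern $P'=\{c\}\cup R$ with $a\notin P'$, and let $\delta=\min(x_P,x_{P'})$. Move $\delta$ of mass:
\begin{itemize}
\item from $P$ and $P'$,
\item to $(a,b,c)$ and $Q\cup R$.
\end{itemize}
The new pattern $Q\cup R$ has weight $w_c+w(R)=W$, so it is full. The two patterns removed and the two added have the same item multiset, so coverage and objective value are preserved.

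The case $P=\{a,c\}\cup Q$ is symmetric, using support patterns containing $b$ but not $a$.

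\textbf{Step 4: termination.} Each exchange strictly increases $x_{(a,b,c)}$. Repeating, the total weight on $\mathcal{P}_a\setminus\{(a,b,c)\}$ goes to zero, and coverage of $a$ equal to $1$ then forces $x'_{(a,b,c)}=1$. Rather than iterate, it is cleaner to argue by extremality: choose an optimal exact-cover LP solution maximizing $x_{(a,b,c)}$. A maximizer exists because the feasible set is a compact polytope over finitely many patterns. If $x_{(a,b,c)}<1$, one exchange as above gives a contradiction.

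\textbf{Main obstacle.} The delicate point is ensuring that the partner pattern $P'$ containing $c$ does not contain $a$, and that $Q\cup R$ is a legitimate pattern.
\begin{itemize}
\item \emph{Existence of $P'$:} exact coverage of $c$ guarantees it. Item $c$ has total coverage $1$, of which at most $1-\epsilon$ comes from patterns containing $a$. This relies on every pattern containing $a$ and $c$ lying in $\mathcal{P}_a$, which holds by definition.
\item \emph{Legitimacy of $Q\cup R$:} $Q$ and $R$ are disjoint as item sets because each item is covered exactly once. Even if non-proper patterns are allowed, multiplicities add correctly, so the merged pattern is valid with weight exactly $W$.
\end{itemize}
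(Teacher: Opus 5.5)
Your argument is correct and is essentially the paper's. It uses the same $\delta$-exchange of a support pattern containing $a$ and $b$ (but not $c$) against one containing $c$ (but not $a$). Your Step~1 makes explicit the exact coverage that the paper uses tacitly. Your choice of an optimal solution maximizing $x_{(a,b,c)}$ replaces the paper's iteration, and this is slightly cleaner: the paper's claimed strict decrease of $|\mathcal{P}^{\overline{a}, b}|+|\mathcal{P}^{\overline{a}, c}|+|\mathcal{P}^{a}|$ is not guaranteed, e.g.\ when $(a,b,c)$ enters the support while only one of $M,Q$ leaves it.

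One correction: support patterns of a fractional exact cover may share items. For example, four unit items with $W=2$ are covered by $\{1,2\},\{1,3\},\{2,4\},\{3,4\}$ at $1/2$ each. So $Q$ and $R$ need not be disjoint, and $Q\cup R$ is legitimate only as a multiset sum, i.e.\ because this SCF admits non-proper patterns; the paper's $P'=(Q\cup S)\setminus\{c\}$ tacitly relies on the same thing. Under that reading your ``main obstacle'' disappears: for any support pattern $\{c\}\cup R\neq(a,b,c)$, even one containing $a$, the multiset $Q\cup R$ still has weight $W$ and the exchange still raises $x_{(a,b,c)}$.
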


\begin{proof}
	By $z(x)=\sum_{i\in I} w_i/W \in \mathbb{Z}_+$, $x$ can use only full patterns. Define
    \begin{itemize}
        \item $\mathcal{P}^{\overline{a}, b} = \{P \in \mathcal{P}\colon x_P > 0,\ a \notin P,\ b \in P\}$,
        \item $\mathcal{P}^{\overline{a}, c} = \{P \in \mathcal{P}\colon x_P > 0,\ a \notin P,\ c \in P\}$,
        \item $\mathcal{P}^{a} = \{P \in \mathcal{P}\colon x_P > 0,\ a \in P\}$.
    \end{itemize}
    If $\mathcal{P}^{\overline{a}, b}=\mathcal{P}^{\overline{a}, c}=\emptyset$, then every positive pattern containing $b$ or $c$ also contains $a$, and thus $x_{(a,b,c)}=1$. Hence assume, without loss of generality, that there exists $Q \in \mathcal{P}^{\overline{a}, c}$. Then there must also exist a full pattern $M \in \mathcal{P}^{a}$ with $c \notin M$; otherwise all patterns in $\mathcal{P}^{a}$ would contain both $a$ and $c$, implying either $\sum_{P\in\mathcal{P}^{a}} x_P<1$ or contradicting the existence of $Q$. By Theorem~\ref{theorem::mandatory_triplet}, every full pattern containing $a$ also contains $b$ or $c$, which implies $b \in M$. 

	Let $S=M\setminus\{a,b\}$. Since $(a,b,c)$ is a full triplet and $M$ is full, $w(S)=W-w_a-w_b=w_c$, so $P'=(Q\cup S)\setminus\{c\}$ is a full pattern. Let $\delta=\min\{x_M,x_Q\}>0$ and replace $\delta$ units of $x_M$ and $x_Q$ by $\delta$ units of $x_{(a,b,c)}$ and $x_{P'}$. This preserves feasibility and optimality. Moreover, either $x_M=0$ or $x_Q=0$, and as $P'$ contains none of $a,b,c$, the quantity $|\mathcal{P}^{\overline{a}, b}|+|\mathcal{P}^{\overline{a}, c}|+|\mathcal{P}^{a}|$ strictly decreases, while $x_{(a,b,c)}$ strictly increases. Repeating this finitely many times yields an optimal SCF LP solution with $x_{(a,b,c)}=1$.
\end{proof}

\begin{lemma}\label{lemma::reduction_ANI}
	Given an \((\alpha,\beta,3)\)-ANI instance \(I\) and a triplet \((a,b,c)\) satisfying Theorem~\ref{theorem::partition_triplet}, the instance \(I' = I \setminus \{a,b,c\}\) is an \((\alpha',\beta',3)\)-ANI instance with $\alpha' \leq \alpha$ and $\beta' \leq \beta$.
\end{lemma}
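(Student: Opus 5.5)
By Theorem~\ref{theorem::partition_triplet}, there are items $a',b',c'$ with the same weights as $a,b,c$ such that either $\{a',b',c'\}\subseteq O$ or $\{a',b',c'\}\subseteq I\setminus O$. Since items of equal weight are indistinguishable in the BPP, $I\setminus\{a,b,c\}$ and $I\setminus\{a',b',c'\}$ are the same instance up to relabeling. So we may assume $(a,b,c)=(a',b',c')$ and treat the two cases separately. In each case we exhibit an original instance $\widehat O'$ and a sequence of triplets that satisfies condition~\eqref{cond:no-completion}, and we check the LP and ILP values.

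\emph{Case $\{a,b,c\}\subseteq I\setminus O$.} I would first argue that we may take $\{a,b,c\}$ to be a single original triplet $\{a_k,b_k,c_k\}$. Its weight multiset coincides with that of some original triplet, by the argument of Case~1 of Theorem~\ref{theorem::partition_triplet}; the only subtle point is when the items are drawn from different triplets, and weight-indistinguishability handles this. Removing $(a_k,b_k,c_k)$ keeps $\widehat O'=\widehat O$, so $\alpha'=\alpha$ and $\beta'=\beta$. The remaining triplets, in their original order, still satisfy~\eqref{cond:no-completion}, since deleting items can only shrink the family of subsets $S$ available to complete $a_g$. The original instance is unchanged, so it still has $z_{\mathrm{LP}}=\beta$ and $z_{\mathrm{ILP}}=\beta+1$, and hence $I'$ is an $(\alpha,\beta,3)$-ANI instance.

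\emph{Case $\{a,b,c\}\subseteq O$.} Set $\widehat O'=\widehat O\setminus\{a,b,c\}$, so $\alpha'=\alpha-3$ and $\sum_{i\in\widehat O'}w_i/W=\beta-1=:\beta'$. Condition~\eqref{cond:no-completion} again survives, because the sets of predecessors only shrink. It remains to check $z^{\widehat O'}_{\mathrm{LP}}=\beta-1$ and $z^{\widehat O'}_{\mathrm{ILP}}=\beta$. For the LP, the weight bound gives $z_{\mathrm{LP}}\ge \beta-1$. Since $z^{\widehat O}_{\mathrm{LP}}=\beta$ equals the weight sum, Lemma~\ref{lemma::mandatory_triplet_lp} applies to $\widehat O$: hypothesis~(ii) holds in $\widehat O$ because it holds in $I\supseteq \widehat O$. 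Hence some optimal LP solution has $x_{(a,b,c)}=1$, and deleting that pattern gives an LP solution of value $\beta-1$ for $\widehat O'$. For the ILP, $z^{\widehat O'}_{\mathrm{ILP}}\le\beta$ because $\widehat O'$ has integral weight sum $\beta-1$ and is a sub-instance of $\widehat O$, which packs into $\beta+1$ bins with one bin freed by $(a,b,c)$; more simply, $\widehat O'$ plus one extra bin always works. If $\widehat O'$ had a perfect packing with $\beta-1$ bins, then adding the bin $(a,b,c)$ would give a perfect packing of $\widehat O$ with $\beta$ bins, contradicting $z^{\widehat O}_{\mathrm{ILP}}=\beta+1$. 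Therefore $z^{\widehat O'}_{\mathrm{ILP}}=\beta'+1$.

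I expect the main obstacle to be the bookkeeping in the first case. There we must make sure that the replacement $a',b',c'$ produced by Theorem~\ref{theorem::partition_triplet} can be taken to be a whole original triplet, and that the relabeling respects the ordering needed for~\eqref{cond:no-completion}. If this cannot be done directly, I would prove that removing any three items whose weights match an original triplet's weights yields an instance isomorphic to one obtained by removing that original triplet, by swapping equal-weight items between positions. A lesser point is that the statement does not require original-irreducibility to be preserved, so nothing further needs checking there.
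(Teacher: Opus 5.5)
Your proof follows the paper's route. You split according to the two alternatives of Theorem~\ref{theorem::partition_triplet} and keep $\widehat O$ in the first case. In the second case you set $\widehat O'=\widehat O\setminus\{a,b,c\}$ and use Lemma~\ref{lemma::mandatory_triplet_lp} to get $z_{\mathrm{LP}}^{\widehat O'}=\beta-1$. Several of your steps are correct and argued more carefully than in the paper:
\begin{itemize}
\item the relabeling by equal weights;
\item the persistence of~\eqref{cond:no-completion};
\item the LP step;
\item the bound $z_{\mathrm{ILP}}^{\widehat O'}\ge\beta$, since a perfect packing of $\widehat O'$ plus the bin $(a,b,c)$ would perfectly pack $\widehat O$.
\end{itemize}

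The step that fails as written is $z_{\mathrm{ILP}}^{\widehat O'}\le\beta$.

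\emph{The ``one extra bin'' claim.} As a general statement about instances with integral weight sum it is false: with $W=10$, five items of weight $6$ weigh $3W$ but need five bins. Using the extra fact $z_{\mathrm{LP}}^{\widehat O'}=\beta-1$, it becomes the MIRUP conjecture specialized to $\widehat O'$, which is open.

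\emph{The ``one bin freed by $(a,b,c)$'' claim.} This presupposes an optimal $(\beta+1)$-bin packing of $\widehat O$ in which $\{a,b,c\}$ is a bin, which is equivalent to the inequality you want. Nothing in the paper supplies it. Theorem~\ref{theorem::mandatory_triplet} and Lemma~\ref{lemma::mandatory_weight} concern perfect packings only. Hypothesis~(ii) forbids only \emph{full} patterns containing $a$ but neither $b$ nor $c$, so in a packing with waste the bin of $a$ may contain neither. If that bin does contain $b$ (or $c$), merging the pair and applying Lemma~\ref{lemma::full_pair} works. The trouble is when $b$ and $c$ lie in two different bins, neither containing $a$, and the items beside $a$ cannot be split to fit the two holes.

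The paper's proof is equally silent at this point: it reads ``gap~$1$'' off Lemma~\ref{lemma::mandatory_triplet_lp}, which controls only the LP value.

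\emph{How to close it when $\beta=3$.} First, $\beta\le2$ cannot occur in this case. Otherwise $\widehat O'$ would weigh at most $W$, so $\widehat O$ would pack into $\beta$ bins. For $\beta=3$, the literature case, $\widehat O'$ weighs $2W$. Next Fit then uses at most three bins, because any two consecutive Next Fit bins together weigh more than $W$. This gives $z_{\mathrm{ILP}}^{\widehat O'}\le 3=\beta$.

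For $\beta\ge4$ a genuine MIRUP-type argument is still required. Neither your proposal nor the paper provides one.
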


\begin{proof}
	If \(a', b', c' \in I \setminus O\), then \(I'\) is an \((\alpha,\beta,3)\)-ANI instance with \(h-1\) triplets. If \(a', b', c' \in O\), then \(I'\) can be classified as an \((\alpha-3,\beta-1,3)\)-ANI instance whose original instance is \(\widehat O = O \setminus \{a',b',c'\}\). This follows from Lemma~\ref{lemma::mandatory_triplet_lp}, which guarantees that \(a', b', c'\) take value~\(1\) in at least one optimal SCF relaxation solution \(x\) for \(O\), implying that \(\widehat O\) remains Non-IRUP with gap~\(1\).
\end{proof}

Based on the lemmas above, our first algorithm consists of selecting a triplet $(a,b,c)$ satisfying Theorem~\ref{theorem::partition_triplet}, fixing it in a partial solution, and setting $I' = I \setminus \{a,b,c\}$. While $|I'| > \alpha$, the original triplet $(a_k,b_k,c_k)$ with the largest index $k$ that still belongs to $I'$ is one triplet that satisfies the theorem. After $h$ reductions, the residual instance $I'$ has size $\alpha$. Since $\alpha$ is constant, we can determine in constant time that no solution for $I'$ exists using $\beta$ bins, and then construct a solution using $\beta+1$ bins.

To identify triplets satisfying Theorem~\ref{theorem::partition_triplet}, we may apply a naive approach that tests, for each full weight triplet $(w_a,w_b,w_c)$, whether there exists an item $d \in \{a,b,c\}$ that does not belong to any full pattern in $I \setminus \{a,b,c\}$. Since there are $O(n^2)$ full weight triplets, and checking whether a subset of items in $I \setminus \{a,b,c\}$ with total weight $W - w_d$ exists can be done in $O(nW)$ time, this procedure identifies a triplet $(a,b,c)$ satisfying Theorem~\ref{theorem::partition_triplet} in $O(n^3W)$ time. Consequently, recovering all triplets requires $O(n^4W)$ time, which is pseudopolynomial, but too slow in practice.

Moreover, for the application of this algorithm and those presented subsequently, it is not necessary to know in advance whether the instance is an original-irreducible $(\alpha,\beta,3)$-\textsc{ANI}. All algorithms apply to arbitrary instances while keeping the same running time. Nevertheless, the above results guarantee a reduction to constant size whenever the input is an original-irreducible $(\alpha,\beta,3)$-\textsc{ANI}.

Finally, although all \textsc{ANI} instances studied in the literature are original-irreducible, original-reducible $(\alpha,\beta,3)$-\textsc{ANI} instances can also be addressed. In this case, some triplets $(a,b,c)$ satisfying Theorem~\ref{theorem::partition_triplet} may contain a reducible pair $(a,b)$. This is handled by allowing the algorithm to forbid up to $\alpha - 1$ such triplets, each corresponding to the assumption that $a$ forms a reducible pair with either $b$ or $c$. The value $\alpha - 1$ represents the maximum number of merges possible before the original instance is reduced to a single item. The resulting algorithm branches on each triplet satisfying the theorem, deciding whether to accept or forbid it, yielding a branching tree of depth $O(n)$ with at most $\alpha-1$ forbidden decisions along any root-to-leaf path. Consequently, the tree has $O(n^\alpha)$ nodes, and at least one leaf corresponds to a residual instance of size $\alpha$. This adds a multiplicative factor of $n^\alpha$ to the running time, which remains pseudopolynomial.

\subsection{An Improved Pseudopolynomial Time Algorithm for Original-Irreducible \texorpdfstring{$(\alpha, \beta, 3)$}{(alpha, beta, 3)}-ANI Classes}~\label{sec::ANI_2}
Next, we present a significantly improved algorithm for original-irreducible $(\alpha, \beta, 3)$-\textsc{ANI} instances that satisfy condition~\eqref{cond:a_is_large}, which is the case for literature instances. Without loss of generality, assume that
\[
	w_{a_k} > w_{b_k} \geq w_{c_k}, \qquad k \in \{1,\ldots,h\}.
\]
Given a triplet $(a,b,c)$ satisfying Theorem~\ref{theorem::partition_triplet}, any pattern containing item $a$ must have one of the following forms:
\[
	\{a, b, c^1, \ldots, c^i\}
	\quad \text{or} \quad
	\{a, b^1, \ldots, b^j, c\},
\]
where $\{b^1,\ldots,b^j\} \subseteq I \setminus \{a,c\}$ and
$\{c^1,\ldots,c^i\} \subseteq I \setminus \{a,b\}$ are sets of items whose total weights equal $w_b$ and $w_c$, respectively.

Each pattern is represented by an ordered tuple whose items are sorted in non-increasing order of weight, with ties broken arbitrarily. The pattern $\{a,b^1,\ldots,b^j,c\}$ may admit representations such as $(a,b^1,\ldots,b^j,c)$ or $(a,b^1,\ldots,b^r,c,b^{r+1},\ldots,b^j)$, depending on the relative weights. The pattern $\{a,b,c^1,\ldots,c^i\}$ admits the representation $(a,b,c^1,\ldots,c^i)$ and, when $i=1$, possibly $(a,c^1,b)$. If $i>1$, only the first representation is possible, since $w_c\le w_b$, implying that each $c^j$ for $1\le j\le i$ has weight strictly smaller than $b$. Multiple representations may arise when items have equal weights.

A \emph{suffix} of a tuple $(t_1,t_2,\ldots,t_o)$ is any contiguous subsequence $(t_{o-s+1},\ldots,t_o)$ with $1 \le s \le o$. Definition~\ref{def:mandatory_weight} and Lemma~\ref{lemma::mandatory_weight} highlight useful properties of what we call next \emph{mandatory weights}. Lemma~\ref{lemma::suff_property} provides a necessary condition for triplets $(a,b,c)$ satisfying Theorem~\ref{theorem::partition_triplet}, corresponding to an original triplet $(a_k, b_k, c_k)$.

\begin{definition}\label{def:mandatory_weight}
	A weight $w$ is \emph{mandatory} for an item $a$ if every full pattern containing $a$ includes either a distinct item of weight $w$ or a subset of items whose total weight is $w$.
\end{definition}

\begin{lemma} \label{lemma::mandatory_weight}
	Let $a,b \in I$ where $w_b$ is a mandatory weight for $a$. If a perfect packing exists, then there exists one in which $a$ and $b$ appear in the same pattern.
\end{lemma}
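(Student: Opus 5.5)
Take any perfect packing $\mathcal{S}$ and let $P$ be the full pattern of $\mathcal{S}$ that contains $a$. The plan is a swap argument, in the spirit of the proof of Lemma~\ref{lemma::full_pair}. Since $w_b$ is mandatory for $a$ (Definition~\ref{def:mandatory_weight}), $P$ contains either a distinct item of weight $w_b$ or a subset of items of total weight $w_b$. In both cases, let $X \subseteq P\setminus\{a\}$ be the single item or the subset, so that $w(X)=w_b$. If $b \in P$, there is nothing to prove.

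So assume $b \notin P$, and let $Q \neq P$ be the pattern of $\mathcal{S}$ containing $b$. The disjointness needed is immediate: $X \subseteq P$ and $b \in Q$ with $P\cap Q=\emptyset$.

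Form the new patterns
\[
P' = (P\setminus X)\cup\{b\}, \qquad Q' = (Q\setminus\{b\})\cup X .
\]
Because $w(X)=w_b$, we get $w(P')=w(P)=W$ and $w(Q')=w(Q)=W$, so both remain full patterns. All other patterns are left untouched, and every item is still used exactly once. The result is therefore a perfect packing with the same number of bins in which $a$ and $b$ share $P'$.

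One subtle point must be handled. If the single item of weight $w_b$ in $P$ is itself a copy of weight $w_b$, the swap simply exchanges two indistinguishable items. That is fine under the paper's convention that items are identified by weight, and the argument still works verbatim. For CSP and SSP the same exchange applies copy by copy; in SSP, $w(Q')\ge W$ is preserved.

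I expect the only real care point to be confirming that $X$ excludes $a$. This holds because the definition requires a \emph{distinct} item, or a subset of the remaining items of the pattern, besides $a$.
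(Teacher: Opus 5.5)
Your proof is correct and is essentially the paper's argument: in a perfect packing where $a$ and $b$ are in different bins, the items packed with $a$ of total weight $w_b$ are exchanged with $b$, and both patterns stay full. The paper states this in a single line, and your version simply makes the exchange explicit.
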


\begin{proof}
	If $a$ and $b$ are not packed together in a perfect packing solution, then the subset of items packed with $a$ whose total weight is $w_b$ can be switched with $b$. This yields another perfect packing in which $a$ and $b$ are together.
\end{proof}

\begin{lemma}\label{lemma::suff_property}
	Given an original triplet $(a_k, b_k, c_k)$ satisfying Theorem~\ref{theorem::partition_triplet}, every pattern containing $a_k$ either contains an item of weight $w_{c_k}$ or has the property that all its ordered tuple representations admit a suffix whose total weight is $w_{c_k}$. This implies that $w_{c_k}$ is a \emph{mandatory weight} for $a_k$.
\end{lemma}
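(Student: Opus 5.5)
We work from the structural description given just before Definition~\ref{def:mandatory_weight}. Since $(a_k,b_k,c_k)$ satisfies the hypotheses of Theorem~\ref{theorem::partition_triplet}, every full pattern containing $a_k$ must contain $b_k$ or $c_k$. Otherwise it would be a full pattern containing $a_k$ that uses only items of $I\setminus\{a_k,b_k,c_k\}$, contradicting hypothesis (ii). So we fix a full pattern $P \ni a_k$ and split into two cases.

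\emph{Case $c_k\in P$.} Then $P$ contains an item of weight $w_{c_k}$, and the first alternative of the claim holds.

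\emph{Case $c_k\notin P$.} Then $b_k \in P$. Let $S=P\setminus\{a_k,b_k\}$. Since $P$ is full and $w_{a_k}+w_{b_k}+w_{c_k}=W$, we get $w(S)=w_{c_k}>0$, so $S\neq\emptyset$. Either some item of $S$ (or of $P$) has weight $w_{c_k}$, and we are again in the first alternative, or every item of $S$ has weight strictly smaller than $w_{c_k}\le w_{b_k}$. We also use $w_{a_k}\ge W/2$ from condition~\eqref{cond:a_is_large} and the convention $w_{a_k}>w_{b_k}$. It remains to show that every item of $S$ is lighter than both $a_k$ and $b_k$:
\begin{itemize}
\item Each $s\in S$ satisfies $w_s \le w(S) = w_{c_k} \le w_{b_k} < w_{a_k}$.
\item In the subcase where no item of $S$ has weight exactly $w_{c_k}$, each such inequality is strict, so $w_s<w_{b_k}$.
\end{itemize}
Hence, in any non-increasing ordering of $P$, ties broken arbitrarily, the items $a_k$ and $b_k$ occupy the first two positions and the items of $S$ form the tail. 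The suffix consisting of the last $|S|$ entries is exactly $S$, with total weight $w_{c_k}$. This holds for every ordered tuple representation, since ties can only permute items within $S$ and cannot move an item of $S$ ahead of $b_k$.

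The mandatory-weight statement then follows directly from Definition~\ref{def:mandatory_weight}. In the first alternative the pattern contains a distinct item of weight $w_{c_k}$. In the second it contains the subset $S$ of items with total weight $w_{c_k}$.

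The main obstacle is the tie and ordering bookkeeping. Specifically, one must make sure that in the second case no item of $S$ ties with $b_k$, which would place it before $b_k$ in some representation and break the suffix. The strict inequality $w_s<w_{c_k}\le w_{b_k}$ handles this, and any tie at exactly $w_{c_k}$ is absorbed by the first alternative.
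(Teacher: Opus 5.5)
Your proof is correct and follows essentially the paper's route: split the full patterns containing $a_k$ by whether $c_k$ is present, and otherwise note that the items completing $\{a_k,b_k\}$ sum to $w_{c_k}$ and form the tail of every non-increasing ordering. The one difference is the tie case $(a_k,c_k^1,b_k)$ with $w_{b_k}=w_{c_k^1}$: the paper exhibits the one-item suffix $(b_k)$, whereas you place it under the first alternative, which the disjunctive statement allows (also, the appeal to $w_{a_k}\ge W/2$ is not actually needed).
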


\begin{proof}
	For patterns of the form $\{a_k, b_k^1, \ldots, b_k^j, c_k\}$, the claim follows immediately, since an item of weight $w_{c_k}$ explicitly appears in the pattern. This is not necessarily the case for patterns of the form $\{a_k, b_k, c_k^1, \ldots, c_k^i\}$. We claim that, in this case, every ordered tuple representation of such a pattern admits a suffix whose total weight equals $w_{c_k}$.
	
	If $b_k$ is the second item in the tuple, the claim follows immediately. The only case in which $b_k$ is not the second item is the tuple $(a_k, c_k^1, b_k)$, which occurs when $i=1$ and $w_{b_k} = w_{c_k^1}$. This tuple also admits a suffix with total weight $w_{c_k}$, completing the proof.
\end{proof}

By Lemma~\ref{lemma::suff_property}, if we identify a triplet $(a,b,c)$ satisfying the following conditions:
\begin{enumerate}
	\item $w_a \geq W/2$,
	\item $w_a + w_b + w_c = W$,
	\item $w_c$ is a mandatory weight for $a$,
\end{enumerate}
then $(a,b,c)$ is a candidate to satisfy Theorem~\ref{theorem::partition_triplet}. Such a triplet can be efficiently identified using dynamic programming~(DP), as described next.

Consider the items $I = \{1, \dots, n\}$ sorted in non-increasing order of weight. We define a DP table $\mathrm{dp}(i,w)$, for $1 \le i \le n+1$ and $0 \le w \le W$, to compute mandatory weights that appear as suffixes. For each item $i$ and partial capacity $w$, we compute the set of \emph{mandatory weights}, denoted as $\mathrm{dp}(i,w)$, required to go from partial capacity $w$ to the bin capacity $W$ when we are allowed to use only the items from $\{i,\ldots,n\}$. The base cases are defined as $\mathrm{dp}(n+1, W)=\emptyset$ and $\mathrm{dp}(n+1,w)=\bot$ for all $w<W$, where $\bot$ denotes an infeasible state, i.e., a state that cannot reach total capacity $W$ while satisfying the conditions.

For $i \in \{1,\ldots,n\}$ and $w \in \{0,\ldots,W\}$, the transition is computed as follows:
\[
	\mathrm{dp}(i,w)=
	\begin{cases}
		\mathrm{dp}(i+1,w), 
		  & \text{if } w + w_i > W, \\[2pt]
		\{w_i\}, 
		  & \text{if } w + w_i = W, \\[2pt]
		\mathrm{dp}(i+1,w) \cap \bigl(\mathrm{dp}(i+1,w + w_i) \cup \{w_i\}\bigr), 
		  & \text{otherwise}.       
	\end{cases}
\]
In the above recurrence, the intersection of two infeasible states is defined as $\bot$, the intersection of an infeasible state $s$ and a feasible state $s'$ is equal to $s'$, and the union of a feasible set with an infeasible state equals $\bot$. When $w + w_i = W$, we mark $w_i$ as mandatory, since even if there exist alternative subsets of items in $\{i+1,\ldots,n\}$ with total weight $w_i$, these subsets can be replaced by item $i$ without loss of optimality.

For each item $a$ with $w_a \ge W/2$, we inspect the state $\mathrm{dp}(a + 1,w_a)$. If this state is feasible and nonempty, $a$ can be merged with an item $i$ such that $w_i \in \mathrm{dp}(a + 1,w_a)$. This is valid independently of other conditions, since such items appear together in at least one perfect packing solution.  For at least one such item $a$, there exist items $b$ and $c$ with $w_c \in \mathrm{dp}(a + 1,w_a)$ and $w_a + w_b + w_c = W$, as it holds for $a_h$. By Lemmas~\ref{lemma::full_pair} and~\ref{lemma::mandatory_weight}, any such triplet $(a,b,c)$ belongs to at least one perfect packing solution. To ensure that fixing the triplet preserves an $(\alpha,\beta,3)$-instance, we verify Theorem~\ref{theorem::partition_triplet} by checking whether a subset of items from $I\setminus\{a,b,c\}$ sums to $W$, which can be done in $O(nW)$ time by solving a knapsack problem.  

If multiple triplets $(a,b,c)$ satisfy $w_c \in \mathrm{dp}(a + 1,w_a)$ and $w_a + w_b + w_c = W$, we can attempt to fix all of them without recomputing $\mathrm{dp}(i,w)$. We iterate over the triplets, and for each one, if distinct items with weights $w_a, w_b, w_c$ exist in the current instance and satisfy Theorem~\ref{theorem::partition_triplet}, we fix the triplet in the partial solution and remove the items from the instance, i.e., $I \gets I \setminus \{a,b,c\}$.  This procedure is valid because removing items does not create new patterns. Consequently, the property that a given item $a$ must be packed with a subset of total weight $w$ remains true after removal, although it may no longer hold that a single item of weight $w$ exists.

Since each DP state may store up to $n$ elements, the worst-case time to compute the table is $O(n^2 W)$. For original-irreducible $(\alpha, \beta, 3)$-\textsc{ANI} instances, each state typically contains only a constant number of elements, yielding a practical runtime of $O(nW)$. For each item $a$, the number of identified triplets could be $O(n)$, resulting in $O(n^2)$ knapsack problems with total runtime $O(n^3 W)$. In practice, each DP computation usually identifies only a constant number of triplets $(a,b,c)$ satisfying Lemma~\ref{lemma::suff_property}, which may require recomputing the DP up to $O(n)$ times but reduces the number of knapsack problems per computation to a constant. Consequently, the overall worst-case complexity is $O(n^4 W)$, while the practical runtime is $O(n^2 W)$.

\subsection{A Faster Pseudopolynomial Time Algorithm for Literature ANI Instances}
\label{sec::ANI_3}
The previous algorithm already exhibits reasonable performance in practice, but it only outperforms state-of-the-art algorithms on literature \textsc{ANI} instances with around 1000 items. Next, we present a fast pseudopolynomial time algorithm for all literature \textsc{ANI} instances. To this end, we observed that only a small fraction of the $nW$ states are feasible in practice, and that literature \textsc{ANI} instances often contain many identical items. Therefore, a CSP-style algorithm can be significantly more effective.

To exploit these observations, we introduce a generalization of the DP-flow formulation, originally developed for the BPP, to the CSP, which we call the \emph{Multiplicity Flow Formulation} (MFF). Our DP algorithm will operate on the underlying graph of this formulation.

Consider a CSP instance with a set $I = \{1,\dots,n\}$, where each item type $i \in I$ has width $w_i$ and demand $d_i$. Without loss of generality, we assume that the items have distinct weights and they are sorted in decreasing order of weight.

\paragraph{Multiplicity Flow graph.}
We define a directed acyclic graph $G = (\mathcal{V},\mathcal{A})$ with vertex set
\[
	\mathcal{V} = \{(i,w) \mid i = 1,\dots,n,\; w = 0,\dots,W\} \cup \{(n+1,W)\}.
\]
The vertex $s = (1,0)$ is the \emph{source node}, and the vertex $t = (n+1,W)$ is the \emph{sink node}. Each vertex $(i,w)$ represents a
partial solution obtained after considering item types $1,\dots,i-1$, with total used capacity equal to $w$.

For each item type $i = 1,\dots,n$, for each vertex $(i,w) \in \mathcal{V}$, and for each multiplicity $m \in \{0,1,\dots,d_i\}$, there is an arc $(u,v,m) \in \mathcal{A}$, where 
\[
	u = (i,w) \quad \text{and} \quad v = (i+1,\, w + m w_i),
\]
whenever $v \in \mathcal{V}$. The value $m$ represents the number of copies of item type $i$ selected at this stage. Arcs with $m = 0$ are referred to as \emph{bypass arcs}.

Each directed path from the source $(1,0)$ to the sink $(n+1,W)$ corresponds to a feasible cutting pattern that exactly fills a bin of capacity $W$. Hence, the arc set represents only full patterns. Although this construction could be extended to allow patterns with waste, they are not necessary for our purposes.

The resulting Multiplicity--Flow Formulation (MFF) is defined as follows:
\begin{alignat}{3}
	\mathrm{MFF}\quad & \min \quad    &   & z \\
	                  & \text{s.t.: } &   &   
	\sum_{(u,v,m)\in\mathcal{A}} \varphi_{(u,v,m)}
	-
		\sum_{(v,q,m)\in\mathcal{A}} \varphi_{(v,q,m)}
	=
	\begin{cases}
	-z & \text{if } v = s,\\
	\;\;z & \text{if } v = t,\\
	0 & \text{otherwise},
	\end{cases}
	&& \forall v \in \mathcal{V}, \label{dp::cons} \\
	                  &               &   &   
	\sum_{(u,v,m)\in\mathcal{A}_i} m \cdot \varphi_{(u,v,m)} = d_i,
	&& \forall i \in I, \label{dp::dem} \\
	                  &               &   &   
	\varphi_{(u,v,m)} \in \mathbb{Z}^+, 
	&& \forall (u,v,m) \in \mathcal{A}, \\
	                  &               &   &   
	z \in \mathbb{Z}^+ .
\end{alignat}

The variables $\varphi_{(u,v,m)}$ represent the amount of flow on each arc $(u,v,m)$, $\mathcal{A}_i$ denotes the subset of arcs associated with item type $i$, and $z$ represents the total flow value. Constraints~\eqref{dp::cons} enforce flow conservation, while constraints~\eqref{dp::dem} ensure that the demand $d_i$ of each $i$ is satisfied exactly.

Our DP algorithm operates on the underlying graph of the MFF. Algorithm~\ref{alg::build_graph} presents an efficient method to compute the arc sets $\mathcal{A}_i$ without bypass arcs (which are unnecessary, as explained later). Consider an arc $(u, v, m) \in \mathcal{A}_i$, where $u = (i, w)$ and $v = (i + 1, w + m w_i)$. For simplicity, we also denote it by $(p, m)$, where $p = w$, since for each $(p, m) \in \mathcal{A}_i$ the vertices $u$ and $v$ are uniquely determined.

The algorithm first computes $\mathcal{A}^\mathrm{f}_i$, the set of arcs from item type $i$ belonging to ordered paths starting from the source node, where an ordered path is one in which items appear in decreasing weight. For each item type $i$, we maintain a forward DP vector $\mathrm{P}_f$ that records the reachable partial capacities using items $\{1,\ldots,i-1\}$.  Next, a backward search computes $\mathcal{A}^\mathrm{b}_i$, consisting of all arcs in $\mathcal{A}^\mathrm{f}_i$ that can also reach the sink node via an ordered path. This backward phase processes items in reverse order using a DP vector $\mathrm{P}_b$, representing partial capacities reachable using item types $\{i+1,\ldots,n\}$. Finally, $\mathcal{A}_i$ is equal to $\mathcal{A}^\mathrm{b}_i$.

\begin{algorithm}[!htbp]
	\setstretch{1}
	\small
	\caption{Compute Adjacency Graph of the Multiplicity--Flow Network}
	\label{alg::build_graph}
	\KwIn{Distinct weights $w_1,\dots,w_n$ in decreasing order with demands $d_1,\dots,d_n$, capacity $W$}
	\KwOut{For each $i$, a set $\emph{Adj}[i]$ of arcs $(p,m)$}
	
	Initialize $\emph{Adj}[i] \gets \emptyset$ for all $i \in \{1, \ldots, n\}$ \\
	Initialize forward DP array $\mathrm{P}_f[0 \dots W] \gets \mathrm{false}$ \\
	$\mathrm{P}_f[0] \gets \mathrm{true}$ \\
	
	\For{$i \gets 1$ \KwTo $n$}{
		$w \gets w_i$ \\
		$\mathrm{P}_f^{old} \gets \mathrm{P}_f$ \\
		
		\For{$p \gets W - w$ \KwDownTo $0$}{
			\For{$m \gets 1$ \KwTo $d_i$}{
				\If{$p + m \cdot w \le W$ \textbf{and} $\mathrm{P}_f^{old}[p]$}{
					Add $(p,m)$ to $\emph{Adj}[i]$ \\
					$\mathrm{P}_f[p + m \cdot w] \gets \mathrm{true}$
				}
			}
		}
		Reverse the order of $\emph{Adj}[i]$
	}
	
	Initialize backward DP array $\mathrm{P}_b[0 \dots W] \gets \mathrm{false}$ \\
	$\mathrm{P}_b[W] \gets \mathrm{true}$ \\
	
	\For{$i \gets n$ \KwDownTo $1$}{
		$w \gets w_i$ \\
		$\mathrm{P}_b^{old} \gets \mathrm{P}_b$ \\
		$\mathrm{NewAdj} \gets \emptyset$ \\
		
		\ForEach{$(p,m) \in \emph{Adj}[i]$}{
			\If{$\mathrm{P}_b^{old}[p + m \cdot w]$}{
				Add $(p,m)$ to $\mathrm{NewAdj}$ \\
				$\mathrm{P}_b[p] \gets \mathrm{true}$
			}
		}
		$\emph{Adj}[i] \gets \mathrm{NewAdj}$
	}
	
	\Return $\emph{Adj}$
\end{algorithm}


\begin{algorithm}[!htbp]
	\setstretch{1}
	\small
	\caption{DP over Adjacency Graph for Mandatory Item Detection}
	\label{alg::dp_on_graph}
	\KwIn{
		Adjacency lists $\mathrm{Adj}[i]$ for items $i=1,\dots,n$; \\
		weights $w_i$, demands $d_i$, capacity $W$
	}
	\KwOut{
		For each partial capacity $p$, a set $\mathrm{dp}[p]$ of mandatory weights
	}
	
	Initialize $\mathrm{dp}[p] \gets \bot$ for all $p = 0,\dots,W$
    
	$\mathrm{dp}[W] \gets \emptyset$ 
	
	\For{$i \gets n$ \KwDownTo $1$}{
		\If{$d_i = 0$}{
			\textbf{continue}
		}
		\ForEach{$(p,m) \in \emph{Adj}[i]$}{
			\If{$m \le d_i$ \textbf{and} $\mathrm{dp}[p + m \cdot w_i] \neq \bot$}{
				            
				\uIf{$\mathrm{dp}[p] = \bot$ \textbf{or} $p + w_i = W$}{
					$\mathrm{dp}[p] \gets \mathrm{dp}[p + m \cdot w_i] \cup \{w_i\}$ 
				}
				\Else{
					$\mathrm{dp}[p] \gets \mathrm{dp}[p] \cap (\mathrm{dp}[p + m \cdot w_i] \cup \{w_i\})$ 
				}
			}
		}
	}
	
	\Return $\mathrm{dp}$
\end{algorithm}


The sets $\mathcal{A}_i$ are represented by the vector $\emph{Adj}[i]$ and are constructed efficiently by Algorithm~\ref{alg::build_graph}, which implements the procedure described above without explicitly referring to $\mathcal{A}_i^{\mathrm{f}}$ and $\mathcal{A}_i^{\mathrm{b}}$. Algorithm~\ref{alg::dp_on_graph} presents the pseudocode of the DP procedure executed over the adjacency graph $\emph{Adj}$, defined by the collection of vectors $\emph{Adj}[i]$. The algorithm processes the item types in reverse order. For each item type $i$, it extends the DP vector computed for item type $i+1$ to obtain its corresponding vector.

Since the arcs $(p,m) \in \emph{Adj}[i]$ are stored in non-increasing order of the position $p$, the DP vector can be safely updated \emph{in place} by iterating over $\emph{Adj}[i]$. This in-place update eliminates the need for bypass arcs: when computing the DP for item type $i$, the DP vector is already initialized with the values corresponding to item type $i+1$, and only the transitions associated with $i$ must be applied.

\begin{lemma}\label{lemma::base_case}
	Let $\mathrm{dp}^a$ denote the vector $\mathrm{dp}$ after processing item type $a$ in Algorithm~\ref{alg::dp_on_graph}. If item type $a$ corresponds to some item $a_k$ for $k \in \{1,\ldots,h\}$, then $\mathrm{dp}^1[w_{a_k}] = \mathrm{dp}^{a+1}[w_{a_k}]$.
\end{lemma}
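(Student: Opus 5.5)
The plan is to show that, in Algorithm~\ref{alg::dp_on_graph}, the entry $\mathrm{dp}[w_{a_k}]$ is never written while item types $a, a-1, \ldots, 1$ are processed. Then the value stored after processing type $a+1$ survives unchanged until the end, which gives $\mathrm{dp}^1[w_{a_k}] = \mathrm{dp}^{a+1}[w_{a_k}]$.

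The first step is to note that, when type $i$ is processed, Algorithm~\ref{alg::dp_on_graph} writes only to positions $p$ with $(p,m) \in \emph{Adj}[i]$. So it suffices to show that, for every $i \le a$, $\emph{Adj}[i]$ contains no arc with $p = w_{a_k}$. By Algorithm~\ref{alg::build_graph}, $\emph{Adj}[i] \subseteq \mathcal{A}^{\mathrm{f}}_i$. An arc $(p,m)$ enters $\mathcal{A}^{\mathrm{f}}_i$ only if $\mathrm{P}_f^{old}[p]$ is true, that is, only if $p$ is the total weight of some multiset of items of types $1,\ldots,i-1$. The backward pass only removes arcs, so it cannot create one at $p = w_{a_k}$.

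The second step, which is the key point, is to show that $w_{a_k}$ is not such a reachable value for any $i \le a$. Types are distinct and sorted in decreasing order, so every type $j < i \le a$ has $w_j > w_a = w_{a_k}$. Any nonempty multiset of such items therefore has total weight strictly greater than $w_{a_k}$. The empty multiset gives $p = 0 \neq w_{a_k}$, since $w_{a_k} \ge W/2 > 0$. Hence $\mathrm{P}_f^{old}[w_{a_k}]$ is false throughout the processing of types $1, \ldots, a$.

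A second route, using condition~\eqref{cond:a_is_large}, works for $i < a$: there $w_i > w_{a_k} \ge W/2$, so $w_{a_k} + m w_i > W$ for every $m \ge 1$, and such an arc cannot exist. However, the reachability argument above handles $i = a$ as well, including the edge case $w_{a_k} = W/2$ with $m = 1$, so I will use it as the main argument.

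Finally, I will conclude by a straightforward induction on $i$ running from $a$ down to $1$: $\mathrm{dp}[w_{a_k}]$ keeps the value it had after processing type $a+1$. The only care needed is to check that the in-place updates of Algorithm~\ref{alg::dp_on_graph} cannot change $\mathrm{dp}[w_{a_k}]$ indirectly. They cannot, since every assignment in the algorithm is to $\mathrm{dp}[p]$ for the current arc's $p$.
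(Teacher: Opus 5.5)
Your proof is correct, but it takes a different route from the paper's.

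The paper uses a capacity (overflow) argument based on condition~\eqref{cond:a_is_large}. If $w_{a_k} > W/2$, any arc of a type $a' \le a$ leaving position $w_{a_k}$ would end at $w_{a_k} + m\,w_{a'} > W$, so no such arc exists. The boundary case $w_{a_k} = W/2$ is handled separately: it invokes condition~\eqref{cond:no-completion} to get $d_a = 1$ and then argues that the arc from $W/2$ to $W$ is absent.

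You instead use the forward pruning built into Algorithm~\ref{alg::build_graph}. At layer $i \le a$, an arc can start only at a position reachable with the strictly heavier types $1,\ldots,i-1$. Such a position is either $0$ or at least $w_{i-1} > w_{a_k}$, so it cannot be $w_{a_k}$.

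Your route is uniform, with no $W/2$ case split. It needs only $w_{a_k} > 0$, so it actually shows the identity for every item type $a$, not just the $a_k$, and it uses neither \eqref{cond:a_is_large} nor \eqref{cond:no-completion}. It also survives the later periodic arc pruning, since pruning only removes arcs.

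The paper's overflow argument has one advantage: for $w_{a_k} > W/2$ it does not depend on forward pruning at all and would hold even in the unpruned multiplicity graph. Its $W/2$ case, however, implicitly needs a reachability argument anyway, and your version states that argument cleanly. In the multiplicity graph, position $W/2$ is unreachable at layer $a$ regardless of $d_a$.
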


\begin{proof}
	Choose some $a = a_k$. First, suppose that $w_{a_k} > W/2$. By ordering, for all item types $a' \in \{1, \ldots, a\}$, there is no arc $(w_{a_k}, w_{a_k}+w_a')$, since $w_a' \geq w_{a_k} \rightarrow w_a' + w_{a_k} > W$. Hence, the value at position $w_{a_k}$ cannot be updated after processing $a + 1$, and therefore $\mathrm{dp}^1[w_{a_k}] = \mathrm{dp}^{a+1}[w_{a_k}]$.  
	
    Now suppose that \( w_{a_k} = W/2 \). Then \( a_k \) is the unique item of weight \( W/2 \) in the BPP instance, that is, \( d_a = 1 \); otherwise, Condition~\ref{cond:no-completion} would be violated. Hence, when processing item \(a\), there does not exist an arc from capacity \(W/2\) to capacity \(W\), since this arc can only be generated by an ordered path consisting of an arc from \(0\) to \(W/2\) followed by another from \(W/2\) to \(W\), which requires \(d_a \geq 2\). Therefore, \( \mathrm{dp}^1[w_{a_k}] = \mathrm{dp}^{a+1}[w_{a_k}] \).
\end{proof}

By Lemma~\ref{lemma::base_case}, and since it suffices to fix the items $a_k$, we can apply the fixing strategy described in Section~\ref{sec::ANI_2} to each item type $a$ with $w_a \ge W/2$ using the value $\mathrm{dp}[w_{a}]$. If a triplet $(a,b,c)$ satisfying Lemma~\ref{lemma::suff_property} is identified, we check whether it satisfies Theorem~\ref{theorem::partition_triplet} by solving a knapsack problem over the underlying graph defined by $\emph{Adj}$. To this end, we ignore the arcs in $\emph{Adj}[b]$ and $\emph{Adj}[c]$ and verify whether there exists a path containing an arc associated with item $a$, from capacity $0$ to capacity $w_a$.

Algorithm~\ref{alg::build_graph} runs in \(O\!\left(\sum_{i \in I} d_i W\right)\) time and is executed only once. Algorithm~\ref{alg::dp_on_graph} performs \(O(|\emph{Adj}| + W)\) iterations, where $|\emph{Adj}|$ denotes the number of edges in the adjacency graph. Each iteration typically runs in $O(1)$ time in practice, since the sets $\mathrm{dp}[p]$ are usually of constant size. The same complexity \(O\!\left(\sum_{i \in I} d_i W\right)\) applies to each knapsack problem solved. Consequently, this approach is significantly faster than recomputing the DP or solving knapsack problems from scratch, which would require \(O\!\left(\sum_{i \in I} d_i W\right)\) time.

The performance can be further improved by periodically pruning arcs that are no longer valid, either because item demands have decreased or because the corresponding arcs no longer belong to any full pattern after the item removals. This pruning step can be implemented using a procedure similar to Algorithm~\ref{alg::build_graph}, but with a runtime of \(O(|\emph{Adj}| + W)\), since the existing adjacency graph \(\emph{Adj}\) can be used directly as input.

For an even faster algorithm, we may assume that every triplet $(a,b,c)$ can be fixed. This does not affect the correctness of the reductions, since any identified triplet can indeed be fixed while preserving at least one perfect packing solution, if one exists. However, this removes the guarantee that the instance is reduced to size $\alpha$ before no further triplets can be fixed, as intermediate instances may no longer be $(\alpha,\beta,3)$-\textsc{ANI}. Nevertheless, for all instances reported in the literature, even when this step is skipped, the algorithm consistently reduces the instances to at most 15 items.

Finally, all versions of the algorithm can be adapted to the Skiving Stock Problem with minimal modifications. In this setting, the backtracking phase is used to certify that no solution using three bins exists for the instance $O$, and to identify two bins whose total occupancy is at least the capacity $W$.

\section{Computational Experiments}~\label{sec::experiments}
Next, we present our computational experiments for the BPP and CSP, comparing our two solvers with other state-of-the-art algorithms. Our solvers were implemented in C++ and can be found at \href{https://gitlab.com/renanfernandofranco/ai-and-ani-solvers}{https://gitlab.com/renanfernandofranco/ai-and-ani-solvers}. The results for our algorithms were obtained on a computer running Ubuntu 22.04.1 LTS (64-bit), using C++14 compiled with GCC 11.4.0, and an Intel\textsuperscript{\textregistered} Xeon\textsuperscript{\textregistered} CPU E5-2630 v4 @ 2.20 GHz with 64 GB of RAM. This processor has a single-thread PassMark indicator (STPI) of 1785. We consider STPI in subsequent sections to compare CPU performance, as available at \href{https://www.passmark.com}{https://www.passmark.com}, where higher values indicate better performance.

We refer to the algorithms introduced in Section~\ref{sec::AI_3} and Section~\ref{sec::ANI_3} as the \textsc{AI} and \textsc{ANI} solvers, respectively. We evaluate their performance against three state-of-the-art solvers: \citet{Loti_2023}, \citet{Baldacci_2024}, and \citet{Silva_2025}. 
To ensure a fair comparison, all solvers were executed with a maximum time limit of one hour per instance. The computational environments used in the literature are as follows:
\begin{itemize}
    \item \citet{Loti_2023}: Intel Xeon E3-1245 v5 @ 3.50 GHz (STPI 2249) with 32 GB of RAM;
    \item \citet{Baldacci_2024}: Intel Xeon Gold 6130 @ 2.10 GHz (STPI 2067) with 192 GB of RAM;
    \item \citet{Silva_2025}: Executed on the same hardware used in this study.
\end{itemize}

Although it would be possible to include a comparison with state-of-the-art SSP solvers~\citep{Silva_2025}, we omit it here, as these solvers perform strictly worse than the corresponding BPP/CSP ones.

Table~\ref{tab:results} presents the computational results obtained on the AI and ANI classes. The columns \texttt{Opt} and \texttt{Avg Time} indicate, respectively, the number of instances solved to proven optimality and the average runtime in seconds for each group $XY$ of instances, with $X$ indicating the class and $Y$ the number of items. \citet{Baldacci_2024} present experiments only for the \textsc{ANI} class. For the AI/ANI solver columns, the reported times for each class correspond to the solver designed for that class. Best values for each line are highlighted in bold.

Observe that our algorithms are approximately one order of magnitude faster, particularly for the \textsc{AI} class. Additionally, our solvers successfully solve all instances to proven optimality in both classes, in contrast to the previous state-of-the-art algorithm~\citep{Silva_2025} that leaves 13 unproven instances.

\begin{table}[htbp]
	\centering
	\caption{\small Comparison between AI/ANI solvers and state-of-the-art algorithms for the AI and ANI classes.}
	\label{tab:results}
		\begin{tabular}{l r rr rr rr rr}
			\toprule
			Class & Total 
			& \multicolumn{2}{c}{de Lima et al.} 
			& \multicolumn{2}{c}{Baldacci et al.} 
			& \multicolumn{2}{c}{Silva and Schouery} 
			& \multicolumn{2}{c}{AI/ANI solvers} \\
			\cmidrule(lr){3-4} 
			\cmidrule(lr){5-6} 
			\cmidrule(lr){7-8} 
			\cmidrule(lr){9-10}
			& & Opt & Avg Time 
			& Opt & Avg Time 
			& Opt & Avg Time 
			& Opt & Avg Time \\
			\midrule
			AI 202         & 50  & \textbf{50}  & 2.0    & --  & --     & \textbf{50}  & 0.4   & \textbf{50}  & \textbf{0.09} \\
			AI 403         & 50  & \textbf{50}  & 25.2   & --  & --     & \textbf{50}  & 5.0   & \textbf{50}  & \textbf{0.12} \\
			AI 601         & 50  & 49  & 50.0   & --  & --     & \textbf{50}  & 57.1  & \textbf{50}  & \textbf{1.01} \\
			AI 802         & 50  & 46  & 556.5  & --  & --     & 48  & 223.7 & \textbf{50}  & \textbf{0.06} \\
			AI 1003        & 50  & 36  & 1577.1 & --  & --     & 42  & 794.9 & \textbf{50}  & \textbf{0.05} \\
			\midrule
			ANI 201        & 50  & \textbf{50}  & 3.0    & \textbf{50}  & 13.6   & \textbf{50}  & 0.4   & \textbf{50}  & \textbf{0.02} \\
			ANI 402        & 50  & \textbf{50}  & 24.9   & \textbf{50}  & 308.2  & \textbf{50}  & 2.2   & \textbf{50}  & \textbf{0.25} \\
			ANI 600        & 50  & \textbf{50}  & 140.7  & 25  & 1931.5 & \textbf{50}  & 11.8  & \textbf{50}  & \textbf{1.01} \\
			ANI 801        & 50  & 49  & 393.2  & 3   & 3352.7 & \textbf{50}  & 57.0  & \textbf{50}  & \textbf{3.30} \\
			ANI 1002       & 50  & 43  & 1302.5 & 0   & 3600.0 & 47  & 374.0 & \textbf{50}  & \textbf{8.89} \\
			\midrule
			\textbf{Total} & 500 & 473 & --     & 178 & --     & 487 & --    & \textbf{500} & --   \\
			\bottomrule
		\end{tabular}
\end{table}


To better assess the performance of our algorithms, we conducted extensive experiments, including additional instance classes, to evaluate their computational behavior across different scenarios. 
Our algorithms require that $\sum_{i \in I} w_i / W = D \in \mathbb{Z}_+$ and that at least $D-3$ items have weights $w_i \geq W/2$. 
Instances satisfying both conditions are referred to as \textit{eligible}. 
Accordingly, our solvers first check eligibility; if an instance does not satisfy these conditions, execution terminates immediately. 

In addition to the \textsc{AI} and \textsc{ANI} classes, we consider all instances from BPPLib \citep{Delorme_2016}. 
We also evaluate the challenging \texttt{SStriplets} instances by \citet{Silva_2025}.

All instances turned out not to be eligible, with a few exceptions for the classes Random (10 eligible instances) and Scholl (3 eligible ones). Consequently, we report computational results only for these two classes. Table~\ref{tab:detailed_AI_fast} presents detailed results for the \textsc{AI} solver across all aforementioned instances without imposing a time limit. For each class, we report the total number of instances, the number of eligible instances, the number of instances solved to proven optimality (\textit{Opt}), average run time in seconds (\textit{Avg Time}), maximum run time in seconds (\textit{Max Time}), and the average number of recursive calls (\textit{Avg Rec Calls}). We also report the number of instances that reached the base case (i.e., a recursive call with $h$ fixed triplets) at least once (\textit{Reach Base}), indicating how many instances were able to fix $D$ triplets at least once, along with the average number of base cases explored (\textit{Avg Base Cases}).

As expected, only the \textsc{AI} instances are solved consistently. However, for all item sizes, there exist instances whose runtime is significantly higher than the average, as reflected by the maximum runtime. The runtime for the \textsc{ANI} class is significantly higher than that of the \textsc{AI} class, due to a substantially larger number of recursive calls, which are two orders of magnitude higher on average. This is explained by the fact that \textsc{AI} instances can often reach the original solution quickly by chance, while \textsc{ANI} instances must explore all candidate sets satisfying the pruning conditions. The maximum runtime in the \textsc{ANI} class is 4227 seconds.

In the worst case, the number of recursive calls is $O(n^6)$, corresponding to $O(n^5)$ candidate sets times $h$ iterations (Section~\ref{sec::AI_2}). For \textsc{AI} instances, the observed number of recursive calls appears bounded by $O(n^2)$, indicating that only $O(n)$ candidates are evaluated on average. If we consider that each recursive call takes time complexity $O(n)$ on average, this yields an apparent performance of $O(n^3)$. For \textsc{ANI} instances, performance appears to be bounded by $O(n^4)$.

Additionally, none of the eligible \textsc{Random} and \textsc{Scholl} instances could be solved, and the solver terminates in less than one millisecond on average for these classes. The average number of recursive calls for such instances is so small that, when rounded to one decimal place, it appears as zero.

\begin{table}[t]
\centering
\small
\caption{Detailed performance of the \textsc{AI} solver across benchmark classes.}
\label{tab:detailed_AI_fast}
\begin{tabular}{lrrrrrrrr}
\toprule
\makecell{Class}
& \makecell{Total}
& \makecell{Eligible}
& \makecell{Opt}
& \makecell{Avg\\Time\\(s)}
& \makecell{Max\\Time\\(s)}
& \makecell{Avg\\Rec\\Calls}
& \makecell{Reach\\Base}
& \makecell{Avg\\Base\\Cases} \\
\midrule
\textsc{AI 202}
& 50 & 50 & 50 & 0.09 & 2.16 & 6.6e4 & 1.0 & 1.90 \\
\textsc{AI 403}
& 50 & 50 & 50 & 0.12 & 3.99 & 6.8e4 & 1.0 & 3.24 \\
\textsc{AI 601}
& 50 & 50 & 50 & 1.01 & 47.62 & 4.4e5 & 1.0 & 1.18 \\
\textsc{AI 802}
& 50 & 50 & 50 & 0.06 & 0.47 & 2.4e4 & 1.0 & 9.42 \\
\textsc{AI 1003}
& 50 & 50 & 50 & 0.05 & 1.43 & 1.1e4 & 1.0 & 1.90 \\
\midrule
\textsc{ANI 201}
& 50 & 50 & 0 & 3.96 & 8.80 & 2.8e6 & 1.0 & 60.26 \\
\textsc{ANI 402}
& 50 & 50 & 0 & 18.04 & 78.10 & 8.6e6 & 1.0 & 58.66 \\
\textsc{ANI 600}
& 50 & 50 & 0 & 60.33 & 257.78 & 2.1e7 & 1.0 & 20.74 \\
\textsc{ANI 801}
& 50 & 50 & 0 & 154.48 & 1207.47 & 4.2e7 & 1.0 & 45.00 \\
\textsc{ANI 1002}
& 50 & 50 & 0 & 164.40 & 4227.55 & 3.7e7 & 1.0 & 13.52 \\
\midrule
\textsc{Random}
& 3840 & 10 & 0 & 0.00 & 0.00 & 0.0 & 0.0 & 0.00 \\
\textsc{Scholl}
& 1200 & 3 & 0 & 0.00 & 0.00 & 0.0 & 0.0 & 0.00 \\
\bottomrule
\end{tabular}
\end{table}


Regarding the \textsc{ANI} solver, we next report experiments in which we modify it to search for a perfect packing for residual instances $I'$ whenever $D^{I'} \le 5$. This modification is unnecessary for \textsc{ANI} instances, since the solver can always recover all original triplets, so that the residual instances correspond to the original ones. However, the larger threshold allows our algorithm to solve several additional \textsc{AI} instances in which the resulting residual instance is slightly larger than the original, a situation that occurs frequently. Note that the fixing strategy of the \textsc{ANI} solver is optimal only if the residual instance $I'$ can be packed into at most $D^{I'}+1$ bins. Therefore, an instance is declared solved by the \textsc{ANI} solver if $D^{I'} \le 5$ and the optimal solution to $I'$ obtained by backtracking uses at most $D^{I'}+1$ bins; otherwise, it is classified as unsolved.

\begin{table}[htb!]
\centering
\small
\caption{Detailed performance of the ANI solver across benchmark classes.}
\label{tab:detailed_ANI}
    \begin{tabular}{l r r r r r r r r r}
        \toprule
        Class
        & Total
        & Eligible
        & Opt
        & \makecell{Avg. \\ Time \\ (s)}
        & \makecell{Max. \\ Time \\ (s)}
        & \makecell{Avg. \\ Iter.}
        & \makecell{Avg. \\ Fixed \\ triplets}
        & \makecell{Avg. \\ Residual \\ Size}
        & \makecell{Avg. \\ DP Size\\ $/|\emph{Adj}|$ } \\
        \midrule
        AI 202   & 50 & 50 & 12 & 0.02 & 0.03 & 53.75  & 61.83  & 16.50 & 15.03 \\
        AI 403   & 50 & 50 & 27 & 0.27 & 0.36 & 110.00 & 128.19 & 18.44 & 25.20 \\
        AI 601   & 50 & 50 & 27 & 1.01 & 1.42 & 161.41 & 194.11 & 18.67 & 33.62 \\
        AI 802   & 50 & 50 & 27 & 3.37 & 4.87 & 215.19 & 261.37 & 17.89 & 50.68 \\
        AI 1003  & 50 & 50 & 30 & 8.72 & 11.34 & 265.17 & 327.93 & 19.20 & 75.86 \\
        \midrule
        ANI 201  & 50 & 50 & 50 & 0.02 & 0.03 & 54.30  & 62.00  & 15.00 & 14.95 \\
        ANI 402  & 50 & 50 & 50 & 0.25 & 0.37 & 111.12 & 129.00 & 15.00 & 24.82 \\
        ANI 600  & 50 & 50 & 50 & 1.01 & 1.44 & 163.34 & 195.00 & 15.00 & 35.18 \\
        ANI 801  & 50 & 50 & 50 & 3.30 & 4.92 & 216.10 & 262.00 & 15.00 & 50.69 \\
        ANI 1002 & 50 & 50 & 50 & 8.89 & 13.74 & 265.84 & 329.00 & 15.00 & 76.01 \\
        \midrule
        Random         & 3840 & 10 & 0 & 0.00 & 0.00 & 0.00 & 0.00 & 0.00 & 0.00 \\
        Scholl         & 1200 & 3  & 0 & 0.00 & 0.00 & 0.00 & 0.00 & 0.00 & 0.00 \\
        \bottomrule
    \end{tabular}
\end{table}


Table~\ref{tab:detailed_ANI} reports the performance of the \textsc{ANI} solver on these instances. For each class, we provide the total number of instances, the number of eligible instances, average execution time in seconds, and maximum execution time in seconds. For eligible instances, we also report the number of instances solved to proven optimality, the average number of iterations (\textit{Avg Iter.}), which corresponds to the average number of times that Algorithm~\ref{alg::dp_on_graph} is executed, and the average number of \textit{fixed triplets} (note that more than one triplet can be fixed per iteration). The column \textit{Avg Residual Size} gives the average size of the instance remaining after all fixing procedures. Additionally, whenever Algorithm~\ref{alg::dp_on_graph} is executed to identify a triplet, we compute the ratio $W \sum_{i \in I'} w_i d_i / |\emph{Adj}|$, where $I'$ is the set of remaining items and $|\emph{Adj}|$ is the number of edges in the network. This metric captures the relative speedup of our approach compared to recomputing the DP from scratch, as discussed in Section~\ref{sec::ANI_2}. The average of this ratio, aggregated by class, is reported in the column \textit{Avg. DP Size/$|\emph{Adj}|$}.

We observe that the \textsc{ANI} solver is only able to solve instances from the \textsc{AI} and \textsc{ANI} classes. Although some instances from other classes, such as \textsc{Random} and \textsc{Scholl}, are eligible, they do not satisfy the conditions required to be solved. Within the \textsc{AI} class, a substantial number of instances are solved, with more than half of the instances solved for classes with 403 items or more. The average number of iterations is very close to the average number of fixed triplets, indicating that typically one triplet is fixed per execution of Algorithm~\ref{alg::dp_on_graph}. As expected, the residual size for \textsc{ANI} instances is consistently equal to 15. For \textsc{AI} instances, the residual size is larger but still relatively small, although not always small enough to allow backtracking to solve all cases. Finally, the number of edges in the network is significantly smaller than the size of the DP computed from scratch, being about 15 times smaller on average for instances with 200 items and around 75 times smaller for instances with 1000 items.

Finally, for the \textsc{ANI} solver, the average and maximum runtimes increase progressively with the number of items in both the \textsc{AI} and \textsc{ANI} classes, indicating a more consistent behavior than the \textsc{AI} solver, with the maximum runtime remaining less than twice the average runtime. The maximum runtime is only 13.7 seconds.

\section{Conclusions}\label{sec::conclusion}
Assuming that $P \neq NP$, this paper shows that the \textsc{AI} and \textsc{ANI} benchmark classes for the Bin Packing, Cutting Stock, and Skiving Stock Problems, although widely regarded as the most challenging instances of the past decade, do not adequately capture the intrinsic hardness of these problems, which are strongly NP-hard.  We introduced a polynomial-time algorithm for the AI class and a pseudopolynomial-time algorithm for the \textsc{ANI} class, both of which solve these instances
orders of magnitude faster than state-of-the-art exact solvers.

These findings indicate that new instances are needed for these problems and that such instances should better reflect their intrinsic strong NP-hardness, both in theory and in practice. Recent work has already taken steps in this direction. For example, \citet{Silva_2025} proposed a benchmark in which the optimal solution is a perfect packing of triplets. These instances are substantially harder than earlier benchmarks for the state-of-the-art algorithm; some instances with up to 216 items could not be solved within a one-hour time limit. Nevertheless, there remains room to further improve benchmark design by creating new challenging instances that satisfy \(z_\mathrm{ILP} \geq \lceil z_\mathrm{LP} + 1 \rceil\).

\end{document}